\title{Exponential-Time Approximation (Schemes) for Vertex-Ordering Problems}
\date{}
\author{Matthias Bentert \and Fedor V. Fomin \and Tanmay Inamdar \and Saket Saurabh}
\newtheorem{theorem}{Theorem}
\newtheorem{proposition}{Proposition}
\newtheorem{lemma}{Lemma}
\crefname{algocf}{Algorithm}{Algorithms}
\newcommand{\problemdef}[3]{
    \begin{quote}
      \normalsize\textsc{#1} \smallskip \\
      \begin{tabularx}{0.9\textwidth}{@{}l@{\hspace{3pt}}X}
        \normalsize\textbf{Input:}    & \normalsize#2 \\
        \normalsize\textbf{Task:} & \normalsize#3
      \end{tabularx}
    \end{quote}
}
\newcommand{\fas}{\textsc{Feedback Arc Set}}
\newcommand{\wfas}{\textsc{Weighted Feedback Arc Set}}
\newcommand{\dkmc}{\textsc{Directed Minimum $(k,n-k)$-Cut}}
\newcommand{\wdkmc}{\textsc{Weighted Directed Minimum $(k,n-k)$-Cut}}
\newcommand{\ola}{\textsc{Optimal Linear Arrangement}}
\newcommand{\dola}{\textsc{Directed Optimal Linear Arrangement}}
\newcommand{\wcw}{\textsc{Weighted Directed Cutwidth}}
\DeclareMathOperator{\poly}{poly}
\DeclareMathOperator{\opt}{OPT}
\DeclareMathOperator{\cut}{cut}
\newcommand{\wmax}{{\ensuremath{w_{\max}}}}
\newcommand{\N}{\ensuremath{\mathds{N}}}
\begin{document}

\maketitle

\begin{abstract}
In this paper, we begin the exploration of vertex-ordering problems through the lens of exponential-time approximation algorithms.
In particular, we ask the following question:
    Can we simultaneously beat the running times of the fastest known (exponential-time) exact algorithms and the best known approximation factors that can be achieved in polynomial time? Following the recent research initiated by Esmer et al. (ESA~2022, IPEC~2023, SODA~2024) on vertex-subset problems, and by Inamdar et al. (ITCS~2024) on graph-partitioning problems, we focus on vertex-ordering problems. 
    In particular, we give positive results for \textsc{Feedback Arc Set}, \textsc{Optimal Linear Arrangement}, \textsc{Cutwidth}, and \textsc{Pathwidth}.
    Most of our algorithms build upon a novel ``balanced-cut'' approach---which is our main conceptual contribution.
    This allows us to solve various problems in very general settings allowing for directed and arc-weighted input graphs.
    Our main technical contribution is a $(1+\varepsilon)$-approximation for any~$\varepsilon > 0$ for (weighted) \textsc{Feedback Arc Set} in~${O^*((2-\delta_\varepsilon)^n)}$ time, where~$\delta_\varepsilon > 0$ is a constant only depending on~$\varepsilon$.
\end{abstract}
\section{Introduction}

Many NP-hard problems admit algorithms running in time $O^*(2^n)$\footnote{Throughout the introduction, $n$ will have the ``natural'' meaning, e.g., number of variables/vertices/elements, and the $O^*$-notation hides polynomial factors. See \cref{sec:prelim} for more details about the latter.}. 
Moreover, for fundamental problems like \textsc{Satisfiability} or \textsc{Set Cover}, there are plausible conjectures, namely Strong Exponential Time Hypothesis (SETH) or the Set Cover Conjecture, respectively, which state that these problems do not admit algorithms running in time $O^*((2-\varepsilon)^n)$ for any~${\varepsilon > 0}$. Although these conjectures have been used as a source of analogous running-time lower bound for other problems, many graph problems are not covered by such results. Indeed, quite a few long-standing exponential-time barriers have finally been overcome in recent years~\cite{Nederlof20,Zamir21,NPSW23}, giving new life to the field of \emph{moderately exponential-time algorithms}~\cite{Fomin:2010mo}.

Improving upon the trivial~$O^*(2^n)$-time algorithm for ``vertex-subset problems'' on graphs, Fomin et al.~\cite{FominGLS16} designed a general framework called \emph{monotone local search} (MLS). This framework enables the creation of state-of-the-art exact exponential algorithms for a wide range of such problems by leveraging the corresponding parameterized algorithms as black-box components. 
On the other hand, for graph partitioning and vertex-ordering problems, even the~$O^*(2^n)$-time algorithm does not follow from a trivial brute force of the search space. Nevertheless, many such problems do admit $O^*(2^n)$-time algorithms via advanced techniques such as Held-Karp-style dynamic programming~\cite{HK61,BFK+12} or speeding up the natural dynamic program via fast subset convolution~\cite{Tanmay24}. For these latter two classes of problems, beating the~$O^*(2^n)$ barrier has proven to be challenging, with the notable exception of \textsc{$k$-Cut} for which Lokshtanov et al.~\cite{Lokshtanov0S23} designed an~${O^*((2-\varepsilon)^n)}$-time algorithm for some fixed~$\varepsilon > 0$. 
Due to the lack of any progress for many of these intensively studied vertex-ordering problems, we think it is the right time to relax the question just slightly in the following way.
\begin{quote}
    Can we show or exclude $(1+\varepsilon)$-approximations for vertex-ordering problems that run in~$O((2-\delta)^n)$ time?
\end{quote}
We show that this question can be answered affirmatively for \fas, but we were not able to achieve the same for any other problem.
However, the methods we develop for \fas{} allow us to answer another natural (weaker) relaxation:
\begin{quote}
    Can we simultaneously beat the running times of the fastest known (exponential-time) exact algorithms and the best known approximation factors that can be achieved in polynomial-time for vertex-ordering problems?
\end{quote}
Given the difficulty in surmounting the current-best exact exponential-time algorithms, there has been a growing research interest in answering the latter relaxation.
This is also justified given the fact that most of the problems are known to be APX-hard---they do not admit arbitrarily good approximations. In some cases, problems do not even admit constant-factor approximations in polynomial time assuming~$P \neq NP$.
To the best of our knowledge, the first exponential-time approximation algorithms were given by Williams~\cite{Williams04} who showed how to generalize~$O^*(2^{\omega n/3})$-time algorithms for (unweighted) \textsc{Max Cut} and \textsc{Max 2-SAT} to~$(1+\varepsilon)$-approximations for the weighted variants using ``rounding tricks''; here~$\omega$ denotes the fast matrix multiplication constant.
Quite recently, Alman et al.~\cite{AlmanCW20} showed that \textsc{Max Sat} can also be approximated faster than~$O^*(2^n)$, that is, they showed that for each~$\varepsilon>0$, there is a~$(1+\varepsilon)$-approximation in~$O^*((2-\delta_\varepsilon)^n)$ time where~$\delta_\varepsilon > 0$ is a constant only depending on~$\varepsilon$.
Moreover, the monotone local search framework of Fomin et al.~\cite{FominGLS16} was extended to give exponential-time approximation schemes for vertex-subset problems in the works of Esmer et al.~\cite{EsmerKMNS22ESA,EsmerKMNS23IPEC,EKMNS24}. Even more recently, Inamdar et al.~\cite{Tanmay24} gave a framework for designing exponential-time approximation schemes for graph partitioning (i.e., cut) problems and a few more examples for specific problems are known~\cite{BEP11,CLN13,BCLNN19}.
However, none of these techniques seem applicable to vertex-ordering problems, where the objective is to find an ordering of vertices that satisfies some polynomial-time satisfiable predicate, such as \textsc{Feedback Arc Set, Optimal Linear Arrangement, Cutwidth}, and \textsc{Pathwidth}.
For all of the above except \textsc{Pathwidth}, the $O^*(2^n)$-time dynamic programming remains the best known algorithm--even including constant-factor approximations.
Moreover, the question whether e.g., \textsc{Feedback Arc Set} or \textsc{Cutwidth} can be solved faster than~$O^*(2^n)$ have been explicit open problems for a long time~\cite{Fomin:2010mo,HKPS10,HPSW13,CLPPS14}.
This reinforces our belief that it is time to study exponential-time approximation algorithms for vertex-ordering problems.
The only example of such a study is in the case of (directed or undirected) bandwidth~\cite{CP10,JKLSS19}.
Our results are summarized in Table \ref{tab:results}.
We believe that working on these vertex-ordering problems from a new angle will further enhance our understanding of them and we hope that there might be some interesting connections to find that use ideas from the vast knowledge acquired on these problems both from the world of polynomial-time approximations and from the world of exact exponential-time algorithms.

\begin{table}[]
    \centering
    \begin{tabular}{l c l}
        problem & approximation factor & running time \\\hline
        \textsc{Feedback Arc Set} & $1+\varepsilon$ & $O^*((2-\delta)^n)$ \\
        \textsc{Directed Cutwidth} & $2$ & $O^*(2^{\frac{\omega n}{3}})$ \\
        \ola & $1.5 + \varepsilon$ & $O^*((2-\delta)^n)$\\
        \dola & $2+\varepsilon$ & $O^*((2-\delta)^n)$\\
        \textsc{Directed Pathwidth} & $2$ & $O^*(1.66^n)$
    \end{tabular}
    \caption{An overview of our results for unweighted input graphs. For each problem in the left column, we show an algorithm with approximation factor in the middle column and the running time is shown in the right column. The value~$\omega < 2.373$ is the fast matrix multiplication exponent and~$\delta > 0$ is a constant that depends on~$\varepsilon$.}
    \label{tab:results}
\end{table}
In the rest of this section, we survey a few different vertex-ordering problems and previous work related to them.
Afterwards, we give an overview of our results and a high-level intuition of our main conceptual contribution.

\paragraph*{Problem Definitions and Related Work.}
We now introduce the different problems used in this paper and briefly state relevant related work for each of them.

We start with a directed version of \textsc{Minimum Cut} with an additional size constraint.

\problemdef{\dkmc}
{A directed graph~$G$ and an integer~$k$.}
{Find a set~$L$ of exactly~$k$ vertices such that the number of arcs from~$V \setminus L$ to~$L$ is minimized.}

The undirected version of this problem was introduced by Bonnet et al.~\cite{BEPT15}, who showed fixed-parameter tractability of the problem when parameterized by~$k+\opt$, where $\opt$ is the minimum number of arcs from~$V \setminus L$ to~$L$ for any set~$L$ of size~$k$.
We are not aware of any mentioning of the directed problem anywhere.
We also study a weighted generalization where we are given an additional non-negative arc-weight function~$w$ and instead of minimizing the number of arcs from~$V \setminus L$ to~$L$, we instead minimize the weight of the arc set from~$V \setminus L$ to~$L$.
\dkmc{} will be at the heart of most of our algorithms.

We continue with \textsc{Feedback Arc Set}.
It can be defined both in terms of a subset of arcs of an input graph and as a vertex-ordering problem.
The former definition asks for a given directed graph~$G=(V,A)$ to find a minimum-size set of arcs whose removal turns~$G$ into a directed acyclic graph (DAG).
We give the definition as a vertex-ordering problem as this will turn out to be more useful for us later.

\problemdef{\textsc{Feedback Arc Set}}
{A directed graph~$G=(V,A)$.}
{Find an ordering~$\pi$ of~$V$ minimizing the number of backward arcs with respect to~$\pi$.}

\textsc{Feedback Arc Set} can be solved in~$O(2^nn^2)$ time by the Held-Karp algorithm~\cite{HK61}.
The weighted version of this problem asks for an ordering such that the sum of weights of all backward arcs is minimized.
The best known polynomial-time approximaiton algorithm (for both the weighted and unweighted variants) achieves an approximation factor of~$O(\log(n) \log\log(n))$~\cite{ENSS98} and any approximation better than~$1.36$ in polynomial time would refute~$P \neq NP$~\cite{Kann92,DS05}.
As stated above, we achieve a~$(1+\varepsilon)$-approximation for \textsc{Feedback Arc Set} faster than~$O(2^n)$ for any~$\varepsilon>0$.
We mention that some of the methods used for this result can also be applied to other vertex-ordering problems (to obtain constant-factor approximations).
We list a few such problems in the following.

In \textsc{Optimal Linear Arrangement} (also known as \textsc{Minimum Linear Arrangement}), one is asked for an ordering of a graph that minimizes the sum of stretches of all (backward) arcs.
Equivalently, the problem asks for an ordering minimizing the sum of cut sizes over all positions.
In the following, we denote by~$\cut_{\pi}^i$ for an ordering~$\pi$ of the vertices and an integer~$i$, the set of all edges (or arcs) from a vertex after position~$i$ to a vertex at position at most~$i$.

\problemdef{\textsc{Optimal Linear Arrangement}}
{An directed or undirected graph~$G$ with vertex set~$V$.}
{Find an ordering~$\pi$ of~$V$ minimizing~$\sum_{i=1}^{n-1}|\cut_{\pi}^i|$.}

In the weighted setting, we want to minimize~$\sum_{i=1}^{n-1}\sum_{e \in \cut_{\pi}^i}w(e)$.
\ola{} can be solved in~$O^*(2^n)$ time~\cite{BFK+12} and the best known polynomial-time algorithm achieves an approximation factor in~$O(\sqrt{\log(n)}\log\log(n))$~\cite{FL07,CHKR10}.
It remains NP-hard even when restricted to interval graphs but there it can be 2-approximated in polynomial time~\cite{CFHKK06}.

In the problem \textsc{Directed Cutwidth}\footnote{We mention in passing that there is some inconsistency in the existing literature regarding the name \textsc{Directed Cutwidth}. An alternative definition takes a directed acyclic graph as input and searches for a topological ordering of the vertices minimizing the (undirected) cutwidth.}, we are looking for an ordering that minimizes the maximum size of any~$\cut_{\pi}^i$.
Formally, it is defined as follows.

\problemdef{\textsc{Directed Cutwidth}}
{A directed graph~$G = (V,A)$.}
{Find an ordering~$\pi$ of~$V$ minimizing~$\max_{i \in [n-1]}|\cut_{\pi}^i|$.}

In the weighted setting, we want to minimize the weight~$\max_{i\in [n-1]}\sum_{e \in \cut_{\pi}^i}w(e)$ instead of the cardinality.
\textsc{Directed Cutwidth} can be solved in~$O(2^nn^2)$ time and the weighted version can be solved in~$O(2^nn^2\log(\wmax))$ time by the Held-Karp algorithm~\cite{HK61}.
The best known approximation factor achieved by a polynomial-time algorithm is~$O(\log^{1.5}(n))$~\cite{APW12}.
A quantum algorithm running in~$O(1.817^n)$ time is also known~\cite{ABIKPV19} and if the input graph is undirected, then the problem admits a~$\log^{1+o(1)}(n)$-approximation~\cite{BKS24}.

Finally, we study \textsc{Directed Pathwidth}.
It is a directed generalization of \textsc{Pathwidth} which itself is variation of treewidth where the tree decomposition is restricted to being a path.
It is known that \textsc{Pathwidth} has an equivalent definition in terms of vertex orderings and this remains true also for the directed variant~\cite{KKKTT16}.
For a position~$i$, consider the number of vertices~$v$ that appear before position~$i$ that have in-neighbors that appear after position~$i$.
The equivalent definition of \textsc{Directed Pathwidth} asks for an ordering that minimizes the maximum quantity over all positions.
The formal definition in terms of vertex orderings is as follows.

\problemdef{\textsc{Directed Pathwidth}}
{A directed graph~$G=(V,A)$.}
{Find an ordering~$\pi$ of~$V$ minimizing $$\max_{i \in [n-1]} |\{v \in V \mid \pi(v) \leq i \land (\exists u \in V.\ \pi(u) > i \land (u,v) \in A) \}|.$$}
The fastest exact algorithm for \textsc{Directed Pathwidth} runs in~$O(1.89^n)$ time~\cite{KKKTT16} and the best known approximation factor of a polynomial-time algorithm is~$O(\log^{1.5}(n))$~\cite{FHL08}.
The undirected pathwidth can be~$\log^{1+o(1)}(n)$-approximated in polynomial time~\cite{BKS24}.

\paragraph*{Overview.}
In this work, we initiate the study of exponential-time approximation algorithms for vertex-ordering problems.
A recurring theme in our algorithms is a ``balanced-cut'' strategy, which results in a running time that is strictly faster than $O^*(2^n)$ and which we consider our main conceptual contribution.
We briefly describe the main idea here.
We first show how to find a partition of the set of vertices in a graph into two sets~$L$ and~$R$ of given sizes such that the number of arcs from~$R$ to~$L$ is (approximately) minimized, that is, we solve \dkmc.\footnote{We mention that we do not always just consider a single partition. Sometimes, we iterate over many such partitions and return the best solution over all considered partitions.}
Depending on the specific problem at hand, we then either solve the two instances induced by these sets of vertices exactly or approximately.
Finally, we show how to combine the two partial solutions with the computed cut between~$R$ and~$L$.
For \textsc{Feedback Arc Set}, we design an additional ``self-improving'' $(1+\varepsilon)$-approximation algorithm by combining the above approach with a novel bootstrapping strategy.

The rest of this work is structured as follows.
In \cref{sec:prelim}, we introduce notation and concepts used in this work.
We show our results for unweighted problems in \cref{sec:unweighted} and for weighted problems in \cref{sec:weighted}.
We conclude with \cref{sec:concl}.

\section{Preliminaries}
\label{sec:prelim}
For a positive integer~$n \in \N$, we denote by~$[n]$ the set~$\{1,2,\ldots,n\}$.
We use standard graph-theoretic notation.
In particular, for a given graph~$G=(V,E)$ and a set~$V' \subseteq V$ of vertices, we use~$G[V']$ to denote the subgraph of~$G$ induced by~$V'$.
To avoid confusion, we will use the term \emph{edges} for undirected graphs and the term \emph{arcs} for directed graphs.
Moreover, we will refer to the edge set of an undirected input graph by~$E$ and the arc set of a directed input graph by~$A$.
All graphs in this work are simple, that is, they do not contain self-loops or parallel edges/arcs (but we allow in the directed case both the arc~$(u,v)$ and the arc~$(v,u)$ to be present at the same time for a pair of vertices~$u \neq v$).
We always denote the number of vertices in the input graph by~$n$ and the number of edges/arcs by~$m$.
We say that a graph is edge-weighted (respectively arc-weighted) if we are given an additional weight function~$w \colon E \rightarrow \N$ ($w \colon A \rightarrow \N$).
The weight of an edge~$e \in E$ is then~$w(e)$ and the weight of a set~$E' \subseteq E$ of edges is the sum of edge weights of edges in~$E$.

A \emph{ordering} $\pi$ of the vertices of a graph~$G$ with vertex set~$V$ is a bijection between~$V$ and~$|V|$.
For a directed graph~$G=(V,A)$ and an ordering~$\pi$ of~$V$, we say that an arc~$(u,v)$ is forward with respect to~$\pi$ if~$\pi(u) < \pi(v)$ and the arc is backward with respect to~$\pi$ otherwise.
The stretch of an edge~$\{u,v\}$ (or an arc~$(u,v)$) is~$|\pi(u) - \pi(v)|$.
For an integer~$i \in [n]$, we say the cut at position~$i$ with respect to~$\pi$ is~${\cut_{\pi}^i = \{\{u,v\} \in E \mid \pi(u) > i \land \pi(v) \leq i\}}$.
In the directed case, we use the definition~$\cut_{\pi}^i = \{(u,v) \in A \mid \pi(u) > i \land \pi(v) \leq i\}$.
See \cref{fig:cut} for an example.
\begin{figure}
    \centering
    \begin{tikzpicture}
        \node[circle,draw,inner sep=3pt] at(0,0) (1) {};
        \node[circle,draw,inner sep=3pt] at(1,0) (2) {};
        \node[circle,draw,inner sep=3pt] at(2,0) (3) {};
        \node[circle,draw,inner sep=3pt] at(3,0) (4) {};
        \node[circle,draw,inner sep=3pt] at(4,0) (5) {};
        \node[circle,draw,inner sep=3pt] at(5,0) (6) {};

        \draw[->] (1) to (2);
        \draw[->] (2) to (3);
        \draw[->, bend right=30] (1) to (4);
        \draw[->] (4) to (5);
        \draw[->] (5) to (6);
        \draw[->,red,bend right=30] (6) to (3);
        \draw[->,red,bend right=30] (5) to (2);

        \draw[dashed] (.5,1) to (.5,-1);
        \draw[dashed] (2.5,1) to (2.5,-1);
    \end{tikzpicture}
    \caption{An example for~$\cut_{\pi}$ in the directed setting. The vertices are ordered by some ordering~$\pi$ from left to right. The cuts~$\cut_{\pi}^1$ and~$\cut_{\pi}^3$ are depicted by dashed lines. The cut~$\cut_{\pi}^1$ is empty and the cut~$\cut_{\pi}^3$ consists of all the red arcs. The stretch of both red arcs is~3.}
    \label{fig:cut}
\end{figure}
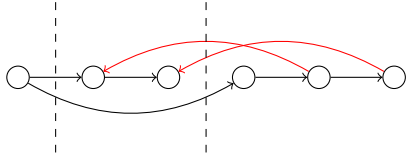

We assume familiarity with the Bachmann–Landau notation (also known as big-O notation) and we use~$O^*$ to hide polynomial factors in the input size.
We note, that this on the one hand hides polynomial factors in~$\log(\wmax)$, where~$\wmax$ is the largest weight in the input graph.
On the other hand, $O^*(\alpha^n) \subseteq O((\alpha+\varepsilon)^{n})$ for any $\alpha \geq 1$ and any~$\varepsilon > 0$.

\section{Unweighted Problems}
\label{sec:unweighted}
In this section, we present our algorithms for unweighted input graphs.
We present all 

algorithms for the directed case but mention that it is straight-forward to adapt all of these algorithms to the undirected case.
Most of our results are based on the following faster algorithm for \dkmc{} which is a generalization of a similar algorithm for \textsc{Maximum Cut} by Williams~\cite{Williams04}.
Here and in the following,~${\omega < 2.373}$ is the matrix-multiplication exponent.

\begin{proposition}
    \label{prop:dkmc}
    \dkmc{} can be solved in~$O^*(\min(2^{\frac{\omega n}{3}},n^{\omega k}))$ time.
\end{proposition}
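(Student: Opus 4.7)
My plan is to adapt the matrix-multiplication framework of Williams~\cite{Williams04} to handle the size constraint $|L|=k$. First, partition $V$ into three roughly equal parts $V_1, V_2, V_3$ of size about $n/3$. Any feasible solution $L$ decomposes as $L = S_1 \cup S_2 \cup S_3$ with $S_i := L \cap V_i$, and the objective decomposes as
\[
f(L) = \sum_{i=1}^{3} \alpha_i(S_i) + \sum_{i \neq j} \beta_{ij}(S_i, S_j),
\]
where $\alpha_i(S_i)$ counts arcs inside $V_i$ from $V_i \setminus S_i$ to $S_i$, and $\beta_{ij}(S_i, S_j)$ counts arcs from $V_i \setminus S_i$ to $S_j \subseteq V_j$. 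By pairing each $\beta_{ij}$ with $\beta_{ji}$ and folding each unary term $\alpha_i$ into one of the resulting pair terms, I can rewrite $f(L) = M_{12}(S_1,S_2) + M_{23}(S_2,S_3) + M_{13}(S_1,S_3)$ for three nonnegative integer functions $M_{ij}$ that are computable in polynomial time per entry.

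I then iterate over all $O(n^2)$ size decompositions $(k_1,k_2,k_3)$ of $k$ with $k_i \leq |V_i|$ and restrict attention to triples with $|S_i|=k_i$; there are at most $N = O(2^{n/3})$ such subsets per part. The task reduces to finding a minimum-weight triangle in a tripartite graph on $3N$ vertices with nonnegative integer edge weights bounded by $m = O(n^2)$. I enumerate candidate target weights $(t_{12},t_{23},t_{13})$ (only polynomially many, since $t_{ij} \leq m$) and, for each choice, define $0/1$ matrices $B_{ij}(S_i,S_j) = [M_{ij}(S_i,S_j) = t_{ij}]$. Computing $C := B_{12} \cdot B_{23}$ via fast matrix multiplication takes $O(N^\omega) = O(2^{\omega n/3})$ time; then I scan for an entry $(S_1,S_3)$ where $C(S_1,S_3) > 0$ and $B_{13}(S_1,S_3)=1$, which certifies a triangle of weight exactly $t_{12}+t_{23}+t_{13}$. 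Taking the minimum over all successful candidate triples yields the optimum in total $O^*(2^{\omega n/3})$ time.

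The alternative bound $O^*(n^{\omega k})$ follows by the same tripartite-triangle reduction but indexing the three sides directly by $k_i$-subsets of $V$ (of which there are at most $O(n^{k})$ per side), giving a matrix product whose dimension is bounded by a suitable power of $n^k$; this is preferable when $k$ is small compared to $n$. The main obstacle I anticipate is the folding step in the first paragraph: one must carefully regroup the six asymmetric cross-part terms $\beta_{ij}$ together with the three unary terms $\alpha_i$ into exactly three pairwise functions $M_{ij}$ so that the objective matches the min-weight triangle template, and the bookkeeping for the size constraints must interact cleanly with the enumeration of decompositions $(k_1,k_2,k_3)$ without blowing up the polynomial overhead.
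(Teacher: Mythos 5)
Your proposal follows essentially the same route as the paper: partition $V$ into three roughly equal parts, enumerate the size splits $(k_1,k_2,k_3)$, reduce minimizing the cut to finding a minimum-weight triangle in a tripartite graph whose sides are indexed by the $k_i$-subsets of each $V_i$, and detect that triangle via fast (Boolean) matrix multiplication after enumerating polynomially many candidate edge-weight triples. The only cosmetic difference is how the ``intra-part'' term is distributed among the three pairwise functions (you fold each $\alpha_i$ into a single $M_{ij}$, the paper splits $\delta(v_j^i)$ evenly across both incident edges), which does not change the argument or the running time.
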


\begin{proof}
    Let~$(G=(V,A),k)$ be an input instance of \dkmc.
    We partition~$V$ arbitrarily into three sets~$V_1,V_2$, and~$V_3$ of (roughly) equal size.
    Next, we guess in~$O(n^2)$ time how many vertices of~$V_1, V_2$, and~$V_3$ belong to an optimal solution~$L$.
    Let these numbers be~$k_1,k_2$ and~$k_3 = k - k_1 - k_2$.
    Next, we build a graph~$H$ as follows.
    For each~$i \in [3]$, we introduce a vertex~$v_j^i$ for each subset~$V_j^i \subseteq V_i$ with~$|V_j^i| = k_i$.
    Let~${\delta(v_j^i) = |\{(u,w)\in A \mid u \in V_i \setminus V_j^i \land w \in V_j^i\}|}$ be the number of arcs from~$V_i \setminus V_j^i$ to~$V_j^i$.
    Note that~$H$ contains~$O(\min(2^{\nicefrac{n}{3}},\binom{\nicefrac{n}{3}}{k}))$ vertices.
    For each pair of vertices~$v_j^i,v_{j'}^{i'}$ with~${i \neq i'}$, we add an edge between them. We set the weight of this edge to
    \begin{align*}
        |\{(u,w) \in A &\mid u \in V_i \setminus V_j^i \land w \in V_{j'}^{i'}\}|\\
        +\ |\{(u,w) \in A &\mid u \in V_{i'} \setminus V_{j'}^{i'} \land w \in V_{j}^{i}\}| + \frac{\delta(v_j^i)+\delta(v_{j'}^{i'})}{2}.
    \end{align*}
    Note that each edge weight is at most~$n^2$.
    Next, for each pair~$i \neq i'$, we guess the weight of the edge~$\{v_j^i,v_{j'}^{i'}\}$ in a minimum-weight triangle in~$H$ (without guessing the indices~$j$ and~$j'$).
    Note that there are~$n^6$ possible guesses.
    For each guess, we compute in time linear in the size of~$H$ a subgraph which only contains the edges corresponding to the current guess. We then check whether this subgraph contains a triangle in~$O^*((\min(2^{\frac{n}{3}},\binom{\nicefrac{n}{3}}{k}))^\omega) = O^*(\min(2^{\frac{\omega n}{3}},n^{\omega k}))$ time~\cite{IR78}.
    The algorithm also finds a triangle if one exists.
    After going through all possible guesses, we report the triangle of minimum weight.
    Note that this weight corresponds by construction to a minimum directed~$(k,n-k)$-cut such that~${|L \cap V_i|=k_i}$ for all~$i\in [3]$.
    Since we iterate over all possible choices of~$k_i$, we always find a minimum-weight directed~$(k,n-k)$-cut.
    The total running time is in~$O^*(\min(2^{\frac{\omega n}{3}},n^{\omega k}))$.
\end{proof}

\subsection{Feedback Arc Set}

We next show how to use \cref{prop:dkmc} to design a~$(1+\varepsilon)$-approximation for \fas{} for any~$\varepsilon > 0$ that is faster than the~$O^*(2^n)$-time Held-Karp algorithm.
We first show how to find a 2-approximation in~$O^*(2^{\frac{\omega n}{3}})$ time and then show how to use a~$(1+\frac{1}{k})$-approximation to find a~$(1+\frac{1}{k+1})$-approximation.

\begin{lemma}
    \label{cor:2approx}
    \fas{} can be 2-approximated in~$O^*(2^{\frac{\omega n}{3}})$ time.
\end{lemma}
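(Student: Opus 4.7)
The plan is to instantiate the balanced-cut strategy directly: invoke \cref{prop:dkmc} with $k=\lfloor n/2\rfloor$ to obtain a partition $(L,R)$ of $V$ with $|L|=\lfloor n/2\rfloor$ minimizing the number of arcs from $R$ to $L$; then run Held--Karp on $G[L]$ and $G[R]$ to compute FAS-optimal orderings $\pi_L$ and $\pi_R$ of the two induced subgraphs; and finally return the concatenation $\pi := \pi_L\cdot\pi_R$. The first step takes $O^*(2^{\omega n/3})$ time by \cref{prop:dkmc}, the two Held--Karp invocations take $O^*(2^{n/2})$ time in total, and the rest is polynomial, so the overall running time is $O^*(2^{\omega n/3})$ since $\omega/3>1/2$.

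For the approximation factor, observe that in $\pi$ an arc $(u,v)$ with $u\in L, v\in R$ is always forward, so the backward arcs of $\pi$ split into three disjoint groups---those entirely in $L$, those entirely in $R$, and those going from $R$ to $L$---yielding the exact equality
\[ |\mathrm{bw}(\pi)| \;=\; \mathrm{FAS}(G[L]) + \mathrm{FAS}(G[R]) + c, \]
where $c:=|A(R\to L)|$ is the value of our cut. Now fix any optimal ordering $\pi^*$ with $\opt$ backward arcs and let $(L^*,R^*)$ be the partition of $V$ it induces at position $\lfloor n/2\rfloor$. Every arc of $A(R^*\to L^*)$ satisfies $\pi^*(u)>\lfloor n/2\rfloor\geq\pi^*(v)$ and hence is automatically backward in $\pi^*$, so $|A(R^*\to L^*)|\leq\opt$; since $(L,R)$ minimizes the number of such arcs over all balanced partitions, we get $c\leq|A(R^*\to L^*)|\leq\opt$.

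The remaining---and slightly less obvious---step is to show that $\mathrm{FAS}(G[L])+\mathrm{FAS}(G[R])\leq\opt$, which must hold even though our cut-minimizing $L$ need not coincide with $L^*$. Here the key observation is that the restrictions $\pi^*|_L$ and $\pi^*|_R$ are valid orderings of $L$ and $R$ respectively, so $\mathrm{FAS}(G[L])$ and $\mathrm{FAS}(G[R])$ are upper bounded by the numbers of backward arcs of these restrictions. These two sets of backward arcs are disjoint subsets of $\mathrm{bw}(\pi^*)$---one lives in $L\times L$ and the other in $R\times R$---so their sizes sum to at most $\opt$ regardless of which balanced partition of $V$ is chosen. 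Combining the three bounds gives $|\mathrm{bw}(\pi)|\leq\opt+c\leq 2\opt$, completing the proof. The only real subtlety in the argument is the last observation: the min-cut partition is ``good enough'' precisely because the two FAS terms are charged against \emph{disjoint} portions of the optimum's backward-arc set irrespective of the specific balanced partition.
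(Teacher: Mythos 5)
Your proof is correct and follows essentially the same balanced-cut approach as the paper: solve \dkmc{} at $k=\lfloor n/2\rfloor$, solve the two induced subinstances optimally via Held--Karp, concatenate, and bound the cut term and the two sub-optimum terms each by $\opt$. Your write-up merely makes explicit (via restricting an optimal ordering $\pi^*$) two lower-bound observations that the paper states more tersely.
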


\begin{proof}
    Let~$G=(V,A)$ be an input graph for \fas.
    We first compute an optimal solution~$L \subseteq V$ of the instance~$(G,\lfloor\frac{n}{2}\rfloor)$ of \dkmc{} in~$O^*(2^{\frac{\omega n}{3}})$ time using \cref{prop:dkmc}. Observe that $|L| = \lfloor\frac{n}{2}\rfloor$, and that the number of arcs from~$V\setminus L$ to~$L$ is a lower bound for the size of a minimum feedback arc set in~$G$.
    Let this set of arcs be~$A'$ and let the set of arcs from~$L$ to~$V \setminus L$ be~$A''$.
    Next, we solve the two instances~$G[L]$ and~$G[V\setminus L]$ of \fas{} optimally in~$O^*(2^{\frac{n}{2}})$ time~\cite{HK61}.
    See \cref{fig:fas} for an example.
    Since a feedback arc set in a graph is also a feedback arc set in any subgraph, it holds that the optimal solutions in~$G[L]$ and~$G[V \setminus L]$ are combined a lower bound for the size of a feedback arc set in~$G$.
    Hence, placing the vertices of~$L$ in the computed order before the vertices of~$V \setminus L$ in the computed order yields a 2-approximation as each arc in~$A'$ contributes one and the arcs in~$A''$ do not contribute anything.
    The running time is in~$O^*(2^{\frac{\omega n}{3}} + 2^{\frac{n}{2}}) = O^*(2^{\frac{\omega n}{3}})$.
    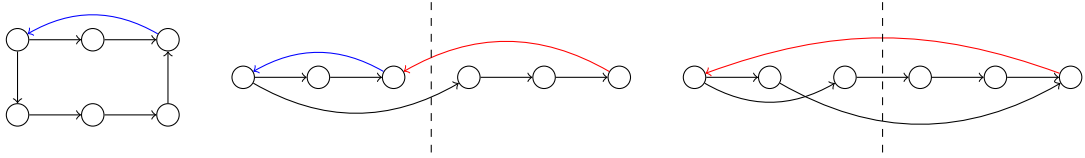
\begin{figure}
        \centering
        \begin{tikzpicture}
            \def\x{3}
            \def\y{9}
            \node[circle,draw,inner sep=3pt] at(0,.5) (o1) {};
            \node[circle,draw,inner sep=3pt] at(1,.5) (o2) {};
            \node[circle,draw,inner sep=3pt] at(2,.5) (o3) {};
            \node[circle,draw,inner sep=3pt] at(0,-.5) (o4) {};
            \node[circle,draw,inner sep=3pt] at(1,-.5) (o5) {};
            \node[circle,draw,inner sep=3pt] at(2,-.5) (o6) {};

            \draw[->] (o1) to (o2);
            \draw[->] (o2) to (o3);
            \draw[->,bend right=30,blue] (o3) to (o1);
            \draw[->] (o1) to (o4);
            \draw[->] (o4) to (o5);
            \draw[->] (o5) to (o6);
            \draw[->] (o6) to (o3);
            
            \node[circle,draw,inner sep=3pt] at(0+\x,0) (1) {};
            \node[circle,draw,inner sep=3pt] at(1+\x,0) (2) {};
            \node[circle,draw,inner sep=3pt] at(2+\x,0) (3) {};
            \node[circle,draw,inner sep=3pt] at(3+\x,0) (4) {};
            \node[circle,draw,inner sep=3pt] at(4+\x,0) (5) {};
            \node[circle,draw,inner sep=3pt] at(5+\x,0) (6) {};

            \draw[->] (1) to (2);
            \draw[->] (2) to (3);
            \draw[->,bend right=30,blue] (3) to (1);
            \draw[->, bend right=30] (1) to (4);
            \draw[->] (4) to (5);
            \draw[->] (5) to (6);
            \draw[->,red,bend right=30] (6) to (3);

            \draw[dashed] (2.5+\x,1) to (2.5+\x,-1);

            \node[circle,draw,inner sep=3pt] at(0+\y,0) (c1) {};
            \node[circle,draw,inner sep=3pt] at(1+\y,0) (c2) {};
            \node[circle,draw,inner sep=3pt] at(2+\y,0) (c3) {};
            \node[circle,draw,inner sep=3pt] at(3+\y,0) (c4) {};
            \node[circle,draw,inner sep=3pt] at(4+\y,0) (c5) {};
            \node[circle,draw,inner sep=3pt] at(5+\y,0) (c6) {};

            \draw[->] (c1) to (c2);
            \draw[->, bend right=30] (c2) to (c6);
            \draw[->,bend right=30] (c1) to (c3);
            \draw[->] (c3) to (c4);
            \draw[->] (c4) to (c5);
            \draw[->] (c5) to (c6);
            \draw[->,red,bend right=20] (c6) to (c1);

            \draw[dashed] (2.5+\y,1) to (2.5+\y,-1);
        \end{tikzpicture}
        \caption{An example of \fas. The input graph is given on the left and the middle/right show two optimal solutions for \dkmc{} with~$k=3$ (red arcs). The optimal solution to \fas{} is shown in blue (and coincides with the red arc on the right). If we use the cut in the middle picture, then we can only find a 2-approximation while the cut on the right leads to an optimal solution.}
        \label{fig:fas}
    \end{figure}
\end{proof}

We next show how to use any constant-factor approximation for \fas{} that is faster than~$O^*(2^n)$ to compute a better constant factor approximation that is also faster than~$O^*(2^n)$.
This ``boosting technique'' results in the following approximation scheme.

\begin{theorem}
    \label{thm:fas}
    For each constant~$\varepsilon > 0$, there is a~$\delta_\varepsilon > 0$ such that \fas{} can be~$(1+\varepsilon)$-approximated in~$O((2-\delta_\varepsilon)^n)$ time.
\end{theorem}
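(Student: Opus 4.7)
The plan is to prove the theorem by induction on $k \geq 1$: I would iteratively convert a $(1 + 1/k)$-approximation algorithm $A_k$ with running time $O^*(c_k^n)$ into a $(1 + 1/(k+1))$-approximation algorithm $A_{k+1}$ with running time $O^*(c_{k+1}^n)$. The base case~$k = 1$ is \cref{cor:2approx}, giving $c_1 = 2^{\omega/3} < 2$. After $\lceil 1/\varepsilon \rceil - 1$ bootstrap iterations, the resulting algorithm is a $(1+\varepsilon)$-approximation, provided that $c_k$ remains bounded below~$2$ throughout.

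For the inductive step, algorithm $A_{k+1}$ naturally builds on the balanced-cut approach from \cref{prop:dkmc}: use it to find a minimum directed $(n/2, n/2)$-cut, yielding a partition $(L, R)$ with cut value $C$; apply $A_k$ recursively on $G[L]$ and $G[R]$; and concatenate the orderings with $L$ first. The naive bounds $C \leq \opt$ (since the cut at position~$n/2$ in the optimal ordering~$\pi^*$ consists of backward arcs of~$\pi^*$) and $\text{FAS}(G[L]) + \text{FAS}(G[R]) \leq \opt$ (by restricting $\pi^*$) combine to give
\[
\text{cost} \;\leq\; C + (1+1/k)\bigl(\text{FAS}(G[L]) + \text{FAS}(G[R])\bigr) \;\leq\; (2 + 1/k)\,\opt,
\]
which is strictly worse than $A_k$ itself. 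The true bootstrap must therefore be more refined: I would enumerate over a small family of candidate cut positions $p_i = i \cdot n / (k+1)$ for $i = 1, \ldots, k$, compute the min cut via \cref{prop:dkmc} at each, recurse with $A_k$ on both sides, and return the best overall ordering.

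The approximation analysis should leverage the fact that at least one of the $k$ candidate partitions is sufficiently aligned with~$\pi^*$---specifically, that there is an index~$i$ for which the cut $C_i$, together with the recursive $(1+1/k)$-approximation on the two subproblems, produces cost at most $(1 + 1/(k+1))\,\opt$. This seems to require a charging argument that balances the excess $(1/k)$-factor from recursion against the savings obtained by selecting a small cut, likely exploiting structural properties of $\pi^*$ beyond the plain min-cut lower bound. The running-time analysis, by contrast, is straightforward: the recurrence $c_{k+1}^n = O^*(2^{\omega n/3} + k \cdot 2 \, c_k^{n/2})$ gives $c_{k+1} \leq \max\bigl(2^{\omega/3},\, c_k^{1/2}\bigr)$ up to polynomial factors, so $c_k$ stabilizes at $2^{\omega/3}$ and the final algorithm runs in the claimed $O((2-\delta_\varepsilon)^n)$ time.

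The main obstacle I expect is the approximation analysis itself: a direct averaging argument across the $k$ candidate cut positions only bounds the minimum cut by $\opt$ (each backward arc of $\pi^*$ can contribute to up to~$k$ of these cuts), which is not enough for the desired $(1+1/(k+1))$-factor. Making the bootstrap succeed will likely require a more delicate charging scheme---for instance, weighting cut positions by a refined notion of local cost, or additionally enumerating over small structural subsets to align the partition with~$\pi^*$---while verifying that the extra enumeration does not inflate the base~$c_k$ beyond~$2$.
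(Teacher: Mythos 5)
Your inductive framework and base case match the paper, and you correctly diagnose why the naive single-cut bootstrap and the plain averaging over $k$ cut positions both fail. But the gap you flag at the end is real, and the resolution you are reaching for is not the one the paper uses; the two approaches are genuinely different in the inductive step.

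The paper does not enumerate $k$ prescribed split positions with a balanced $(n/2,n/2)$-style cut from \cref{prop:dkmc}. Instead, for a carefully chosen $\alpha\in(0,1)$ (determined by $\delta_k$ via the equation involving $\gamma=1/\alpha$), it iterates over \emph{every} subset $V'\subseteq V$ of size exactly $\alpha n$. The key asymmetry is that the small side $G[V']$ is solved \emph{exactly} for all $V'$ simultaneously---this is essentially free, at cost $O^*(\binom{n}{\alpha n})$, because the Held--Karp/Bodlaender DP visits all subsets of size $\le \alpha n$ anyway---while the large side $G[V\setminus V']$ is handled by the recursive $(1+1/k)$-approximation $A_k$, except for the single subset $V^*$ minimizing $a(V^*)$ (the number of backward-crossing arcs), where the large side is also solved exactly. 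The approximation argument is then a two-case analysis, not a charging scheme over positions: (i) if $a(V^*)\le\frac{1}{k+1}\opt$, the exact--exact combination at $V^*$ already gives $(1+\frac{1}{k+1})\opt$; (ii) if $a(V^*)>\frac{1}{k+1}\opt$, then the same lower bound holds for the topological prefix $V''$ of size $\alpha n$ in an optimal acyclic ordering---and since $a(V'')$ plus the two induced optima is at most $\opt$, the residual optimum $d$ on the large side satisfies $d\le\frac{k}{k+1}\opt$, so the $(1/k)$-excess incurred by $A_k$ on that side is only $d/k\le\frac{1}{k+1}\opt$. This ``either the cut is small, or the residual optimum it must absorb is small'' dichotomy is exactly the missing idea; it only works because the enumeration includes \emph{all} $\alpha n$-sized subsets (so $V''$ is guaranteed to be among them), not a fixed $O(k)$ family. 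Correspondingly, the running-time recurrence is not $c_{k+1}\le\max(2^{\omega/3},c_k^{1/2})$; it is a single-level bound $\binom{n}{\alpha n}\cdot(2-\delta_k)^{(1-\alpha)n}\le 2^{(1-\alpha)n}$, where $\alpha$ is tuned so that the binomial overhead exactly cancels the gain from $2-\delta_k<2$, yielding $\delta_{k+1}<2-2^{1-\alpha}$. Your proposal, as written, does not establish the approximation factor and would need this entirely different mechanism to be made rigorous.
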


\begin{proof}
    We prove this statement by showing via induction that for each~$k \geq 1$, there is a~$\delta_k > 0$ such that \fas{} can be~$(1+\frac{1}{k})$-approximated in~$O((2-\delta_k)^n)$ time.
    Since any constant~$\varepsilon > 0$ is larger than~$\frac{1}{k}$ for some constant value of~$k$, this will prove the theorem.
    Note that the base case~$k=1$ is proven by \cref{cor:2approx}.
    It remains to show the induction step, that is, assuming that the statement is true for some value of~$k$, we need to show that the statement is also true for~$k+1$.
    To this end, let~$k$ be arbitrary but fixed and assume that \fas{} can be~$(1+\frac{1}{k})$-approximated in~$(2-\delta_k)^n$ time for some~$\delta_k > 0$.
    Let~$\gamma$ be the solution to the equation \[\frac{\gamma \log(\gamma) - (\gamma - 1) \log(\gamma - 1)}{\gamma - 1} = 1 - \log(2-\delta_k),\] where all logarithms use base 2, and let~$\alpha = \frac{1}{\gamma}$.\footnote{We note that~$\gamma$ is roughly~$- \frac{W(-\frac{(2-\delta_k)(1-\log(2-\delta_k))}{2})}{1-\log(2-\delta_k)}$, where~$W$ is the Lambert~$W$ function (also known as the product logarithm). It is known that this function cannot be expressed in terms of elementary functions.}
    Note that~$\gamma \geq 2$ exists for each~$0 < \delta_k < 1$ as the right side of the equations is some value between zero and one and the left side evaluates to~$2$ for~$\gamma = 2$ and tends towards~$0$ for increasing values of~$\gamma$ as shown next.
    Note that the left side of the previous equation is equivalent to~$(\log(\gamma) - \log(\gamma - 1)) + \frac{\log(\gamma)}{\gamma - 1}$.
    The first summand describes the derivative of the logarithm (which is~$\frac{1}{\ln(2)\gamma}$) and therefore tends towards 0 for increasing values of~$\gamma$.
    The second summand also clearly tends towards zero as the logarithm grows sublinearly.

    We next describe the~$(1+\frac{1}{k+1})$-approximation for \fas.
    Let~$G = (V,A)$ be the input graph and let~$n$ be the number of vertices in~$V$.
    For the sake of simplicity, we will assume that~$\alpha n$ is an integer to avoid dealing with rounding issues (which will vanish in the Bachmann-Landau notation anyway).
    Bodlaender et al.~\cite{BFK+12} showed how to solve \fas{} exactly in~$O^*(2^n)$ time.
    Importantly, their algorithm iterates over all induced subgraphs (in order of increasing size), and computes the optimal solution for each induced subgraph in polynomial time for each subgraph.
    We use their algorithm for all induced subgraphs with at most~$\alpha n$ vertices.
    Let~$\mathcal{V}$ be the set of all sets of vertices of size exactly~$\alpha n$.
    Using the algorithm by Bodlaender et al.~\cite{BFK+12} to compute the optimal solution for all graphs induced by sets in~$\mathcal{V}$ takes~$O^*(\binom{n}{\alpha n})$ time.
    In the same time, we can also compute for each~$V' \in \mathcal{V}$ the number of arcs in~${\{(u,v) \mid u \in V \setminus V' \land v \in V'\}}$.
    Let~$a(V')$ denote this number and let~${V^* \in \mathcal{V}}$ be a set minimizing~$a(V^*)$.
    We then use the algorithm by Bodlaender et al.~\cite{BFK+12} to solve the instance~${G[V \setminus V^*]}$ optimally in~$O^*(2^{(1-\alpha)n})$ time and for all other~$V' \in \mathcal{V}$, we use the~$(1 + \frac{1}{k})$-approximation for~$G[V \setminus V']$ that runs in~$O((2-\delta_k)^{(1-\alpha)n})$ time.
    Finally for each~$V' \in \mathcal{V}$, we compute the sum of the optimal value for~$G[V']$ plus~$a(V')$ plus the size of the computed solution for~$G[V \setminus V']$.
    We return the smallest number found (or the set associated to the smallest number if we want to return the feedback arc set).

    It remains to show that our algorithm always produces a~$(1+\frac{1}{k+1})$-approximation and to analyze the running time.
    For the approximation factor, let~$F$ be a feedback arc set of~$G$ of minimum size and let~$\opt = |F|$.
    We distinguish between the following two cases.
    Either~${a(V^*) \leq \frac{1}{k+1} \opt}$ or~$a(V^*) > \frac{1}{k+1} \opt$.
    In the first case, note that the solution we computed for~$V^*$ consists of optimal solutions for~$G[V^*]$ and~$G[V\setminus V^*]$ plus~$a(V^*)$.
    Since the first two summands are combined at most~$\opt$ and~$a(V^*) \leq \frac{1}{k+1} \opt$, we indeed return a feedback arc set of size at most~$(1 + \frac{1}{k+1})\opt$.
    If~$a(V^*) > \frac{1}{k+1} \opt$, then note that for each~$V' \in \mathcal{V}$ it holds that~$a(V') > \frac{1}{k+1} \opt$.
    Moreover, since the graph~$G' = (V,A\setminus F)$ is acyclic, it has a topological ordering.
    Let~$V''$ be the set of the first~$\alpha n$ vertices in that topological ordering.
    Let~$d$ be the size of a minimum feedback arc set in~$G[V \setminus V'']$ and let~$c = \frac{d}{\opt}$.
    Note that~$c \leq \frac{k}{k+1}$.
    Finally, the solution we compute for~$V''$ has size \[(1-c)\opt + (1+\frac{1}{k}) c \opt \leq \opt - c\opt + c\opt + \frac{1}{k} \frac{k}{k+1} \opt = (1+\frac{1}{k+1})\opt.\]
    Thus, we also return a~$(1+\frac{1}{k+1})$-approximation in this case.

    Let us now analyze the running time.
    Recall that~$\gamma = \frac{1}{\alpha}$.
    Computing~$\mathcal{V}$,~$a(V')$, and the optimal solution for~$G[V']$ for each~$V' \in \mathcal{V}$ take~$O^*(\binom{n}{\alpha n})$ time.
    Solving the instance~${G[V \setminus V^*]}$ optimally takes~$O^*(2^{(1-\alpha)n})$ time.
    Computing the approximate solutions for all sets in~$\mathcal{V}$ takes overall~${O^*(\binom{n}{\alpha n} \cdot (2-\delta_k)^{(1-\alpha)n})}$~time.
    By standard arguments,~$\binom{n}{\alpha n} \in O((\frac{\gamma^\gamma}{(\gamma-1)^{\gamma-1}})^{\alpha n})$~\cite{Rob55}.
    Hence
    \begin{align*}
        O(\binom{n}{\alpha n} \cdot (2-\delta_k)^{(1-\alpha)n}) &= O(2^{\log(\frac{\gamma^\gamma}{(\gamma-1)^{\gamma-1}})^{\alpha n})} \cdot 2^{\log((2-\delta_k)^{(1-\alpha)n}})\\
        &\subseteq O(2^{\alpha n (\log(\gamma^\gamma) - \log((\gamma-1)^{\gamma-1})) + (1-\alpha)n\log(2-\delta_k)})\\
        &\subseteq O(2^{(1-\alpha) n \big( \frac{\alpha}{1-\alpha} (\gamma \log(\gamma) - (\gamma - 1) \log(\gamma-1)) + \log(2-\delta_k)\big)})\\
        &\subseteq O(2^{(1-\alpha) n \big( \frac{\gamma \log(\gamma) - (\gamma - 1) \log(\gamma-1)}{\gamma - 1}  + \log(2-\delta_k)\big)})\\
        &\subseteq O(2^{(1-\alpha)n}),
    \end{align*}
    where again all logarithms use base~$2$.
    The last equality is due to the definition of~$\gamma$.
    Thus, the total running time of our algorithm is~$O^*(2^{(1-\alpha)n})$.
    This is in~$O((2-\delta_{k+1})^n)$ time for any~${0 < \delta_{k+1} < 2-2^{1-\alpha}}$.
    This completes the induction step and concludes the proof.
\end{proof}

\subsection{Cutwidth}

We continue with \textsc{Directed Cutwidth}.
We again show how to use \cref{prop:dkmc} to compute a 2-approximaiton.

\begin{proposition}
    \label{prop:cw}
    \textsc{Directed Cutwidth} can be 2-approximated in~$O^*(2^{\frac{\omega n}{3}})$ time.
\end{proposition}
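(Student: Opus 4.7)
The plan is to follow the same balanced-cut template as in the proof of \cref{cor:2approx} for \fas. Given an input graph $G=(V,A)$, I would first invoke \cref{prop:dkmc} with $k=\lfloor n/2\rfloor$ to compute in $O^*(2^{\omega n/3})$ time a set $L\subseteq V$ with $|L|=\lfloor n/2\rfloor$ minimizing the number $c:=|A(R\to L)|$ of arcs from $R:=V\setminus L$ to $L$. I would then solve the two subinstances $G[L]$ and $G[R]$ of \textsc{Directed Cutwidth} exactly with the Held--Karp-style DP in $O^*(2^{n/2})$ time each, and output the ordering $\pi$ obtained by concatenating the optimal ordering of $L$ first, followed by the optimal ordering of $R$.

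For the approximation analysis I would establish two lower bounds on $\opt$, the optimal directed cutwidth of $G$. First, monotonicity of cutwidth under induced subgraphs gives $\opt(G[L]),\,\opt(G[R])\le\opt$, since the restriction of any ordering of $V$ to $L$ (resp.\ $R$) is an ordering of $L$ (resp.\ $R$) whose cuts are subsets of those in $G$. Second, any ordering of $V$ has a cut at position $\lfloor n/2 \rfloor$, which is a directed $(\lfloor n/2\rfloor, \lceil n/2\rceil)$-cut and hence has size at least $c$; consequently $c\le\opt$. Next I would bound the cut of $\pi$ at every position by case analysis: for $i\le|L|$, $\cut_\pi^i$ decomposes into arcs internal to $L$ (contributing at most $\opt(G[L])$) and arcs from $R$ to $L$ with target at position $\le i$ (contributing at most $c$); symmetrically, for $i>|L|$ the cut decomposes into at most $\opt(G[R])$ internal-to-$R$ arcs plus at most $c$ arcs from $R$ to $L$; and for $i=|L|$ the cut has size exactly $c$. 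Combining these with the two lower bounds gives $\max_i|\cut_\pi^i|\le \max(\opt(G[L]),\opt(G[R]))+c \le 2\opt$, i.e.\ a 2-approximation.

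The running time is dominated by the call to \cref{prop:dkmc}, so the total time is $O^*(2^{\omega n/3}+2^{n/2})=O^*(2^{\omega n/3})$.

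The only substantive point, and the one I would treat most carefully, is the position-by-position cut decomposition: one must verify that no arc is double-counted and, in particular, that at position $i\le|L|$ no arc internal to $R$ can contribute (since both its endpoints sit at positions $>|L|\ge i$) and at position $i>|L|$ no arc internal to $L$ can contribute (both endpoints lie at positions $\le|L|<i$, so the source fails $\pi(u)>i$). Once this bookkeeping is in place, the rest is immediate from the two lower bounds.
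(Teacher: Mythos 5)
Your proposal is correct and follows essentially the same balanced-cut approach as the paper's proof: compute an optimal $(\lfloor n/2\rfloor, \lceil n/2\rceil)$-cut via \cref{prop:dkmc}, solve the two induced halves exactly, concatenate, and charge every cut against $\max(\opt(G[L]),\opt(G[R]))+|A^*|\le 2\opt$. Your explicit position-by-position decomposition is a slightly more detailed rendering of the paper's observation that the cutwidth of $G'=(V,A\setminus A^*)$ under the concatenated ordering is at most $\opt$ and that reinserting $A^*$ raises it by at most $|A^*|$, but the two arguments are the same in substance.
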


\begin{proof}
    Let~$G=(V,A)$ be an input to \textsc{Directed Cutwidth}.
    We first find an optimal solution~$L$ to the instance~$(G,\lfloor \frac{n}{2} \rfloor)$ of \dkmc{} in~$O^*(2^{\frac{\omega n}{3}})$ time using \cref{prop:dkmc}.
    Let~$R = V \setminus L$ and let~$A^*$ be the set of arcs~$(u,v)$ with~$u \in R$ and~$v \in L$.
    Note that the size of~$A^*$ is by definition a lower bound for the directed cutwidth of~$G$.
    Next, we can solve~$G[L]$ and~$G[R]$ optimally in~$O^*(2^{\frac{n}{2}})$ time~\cite{HK61}.
    Since adding a vertex and/or arc to a graph cannot decrease its directed cutwidth, we have that the cutwidth of~$G[L]$ and~$G[R]$ are both at most the cutwidth of~$G$.
    Hence, placing all vertices in~$L$ according to the computed solution for~$G[L]$ and then all vertices in~$R$ according to the solution for~$G[R]$ gives an ordering for the graph~$G'=(V,A \setminus A^*)$ that shows that the cutwidth of~$G'$ is at most the cutwidth of~$G$.
    See \cref{fig:cutwidth} for an example.
    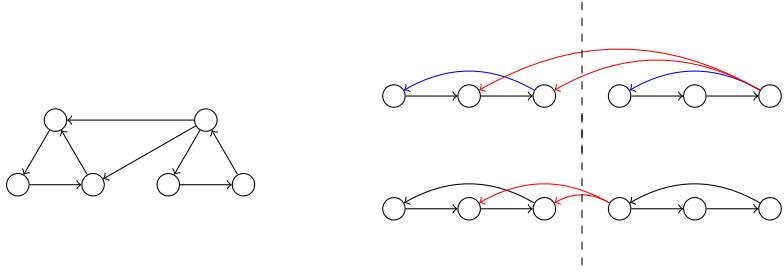
\begin{figure}
        \centering
        \begin{tikzpicture}
            \def\x{5}
            \def\y{5}
            \def\z{.75}
            \node[circle,draw,inner sep=3pt] at(0,-.43) (o1) {};
            \node[circle,draw,inner sep=3pt] at(1,-.43) (o2) {};
            \node[circle,draw,inner sep=3pt] at(.5,.43) (o3) {};
            \node[circle,draw,inner sep=3pt] at(2,-.43) (o4) {};
            \node[circle,draw,inner sep=3pt] at(3,-.43) (o5) {};
            \node[circle,draw,inner sep=3pt] at(2.5,.43) (o6) {};

            \draw[->] (o1) to (o2);
            \draw[->] (o2) to (o3);
            \draw[->] (o3) to (o1);
            \draw[->] (o4) to (o5);
            \draw[->] (o5) to (o6);
            \draw[->] (o6) to (o4);
            \draw[->] (o6) to (o2);
            \draw[->] (o6) to (o3);

            \node[circle,draw,inner sep=3pt] at(0+\x,0+\z) (p1) {};
            \node[circle,draw,inner sep=3pt] at(1+\x,0+\z) (p2) {};
            \node[circle,draw,inner sep=3pt] at(2+\x,0+\z) (p3) {};
            \node[circle,draw,inner sep=3pt] at(3+\x,0+\z) (p4) {};
            \node[circle,draw,inner sep=3pt] at(4+\x,0+\z) (p5) {};
            \node[circle,draw,inner sep=3pt] at(5+\x,0+\z) (p6) {};

            \draw[->] (p1) to (p2);
            \draw[->] (p2) to (p3);
            \draw[->,bend right=30,blue] (p3) to (p1);
            \draw[->] (p4) to (p5);
            \draw[->] (p5) to (p6);
            \draw[->,bend right=30,blue] (p6) to (p4);
            \draw[->,bend right=30,red] (p6) to (p2);
            \draw[->,bend right=30,red] (p6) to (p3);

            \draw[dashed] (2.5+\x,1.25+\z) to (2.5+\x,-.75+\z);

            \node[circle,draw,inner sep=3pt] at(0+\y,0-\z) (q1) {};
            \node[circle,draw,inner sep=3pt] at(1+\y,0-\z) (q2) {};
            \node[circle,draw,inner sep=3pt] at(2+\y,0-\z) (q3) {};
            \node[circle,draw,inner sep=3pt] at(3+\y,0-\z) (q4) {};
            \node[circle,draw,inner sep=3pt] at(4+\y,0-\z) (q5) {};
            \node[circle,draw,inner sep=3pt] at(5+\y,0-\z) (q6) {};

            \draw[->] (q1) to (q2);
            \draw[->] (q2) to (q3);
            \draw[->,bend right=30] (q3) to (q1);
            \draw[->] (q4) to (q5);
            \draw[->] (q5) to (q6);
            \draw[->,bend right=30] (q6) to (q4);
            \draw[->,bend right=30,red] (q4) to (q2);
            \draw[->,bend right=30,red] (q4) to (q3);
            
            \draw[dashed] (2.5+\y,1.25-\z) to (2.5+\y,-.75-\z);
        \end{tikzpicture}
        \caption{An example of \textsc{Directed Cutwidth}. The input instance is depicted on the left with the exception that all vertices in the left triangle have arcs towards all vertices in the right triangle unless the respective reverse arc is present. We decided to not show these arcs for the sake of readibility. On the right, two optimal solutions to \dkmc{} with~$k=3$ are shown (red arcs) with optimal orderings for the two respective subproblems of \textsc{Directed Cutwidth} are depicted. The ordering shown on the top has a directed cutwidth of~3 (e.g. at positions 4 and 5) and the ordering on the bottom has a directed cutwidth of~$2$.}
        \label{fig:cutwidth}
    \end{figure}
    Adding the set~$A^*$ to~$G'$ can increase the cutwidth by at most~$|A^*|$ and adding arcs~$(u,v)$ with~$u \in L$ and~$v \in R$ does not increase the directed cutwidth.
    Thus, the computed ordering is a 2-approximation as the size of~$A^*$ is at most the cutwidth of~$G$ as shown above.
    Note that the running time is in~$O^*(2^{\frac{\omega n}{3}} + 2^{\frac{n}{2}}) = O^*(2^{\frac{\omega n}{3}})$.
    This concludes the proof.
\end{proof}

Unfortunately, due to the inherent maxima nature of \textsc{Directed Cutwidth}, we do not know how to apply a boosting technique similar to \cref{thm:fas} to get an approximation factor smaller than~$2$.

\subsection{Optimal Linear Arrangement}

In this section, we study \ola.
We show how to compute a~$(2+\varepsilon)$-approximation for \dola{} for each~$\varepsilon > 0$ in~${O((2-\delta_\varepsilon)^n)}$ time for some~$\delta_\varepsilon > 0$  only depending on~$\varepsilon$.
We then show how to improve the algorithm in the undirected setting to achieve a~$(\frac{3}{2}+\varepsilon)$-approximation in the same time.

\begin{proposition}
    \label{prop:dola}
    For every~$0 < \alpha < 1$, there is a~$(1+\frac{1}{(1-\alpha)})$-approximation for \dola{} that runs in~$O^*(2^{\frac{\omega n}{3}} + 2^{(1-\frac{\alpha}{2})n})$ time, where~$\omega$ is the matrix-multiplication exponent.
\end{proposition}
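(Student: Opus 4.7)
The plan is to apply the balanced-cut strategy, but instead of committing to a single split size I would sweep over a carefully chosen range of split sizes and return the best ordering found. Concretely, for each integer $i \in [\alpha n / 2,\, (1-\alpha/2)n]$, invoke \cref{prop:dkmc} to compute a set $L_i \subseteq V$ of size exactly $i$ minimizing the number of arcs from $V \setminus L_i$ into $L_i$; then solve the two induced subinstances $G[L_i]$ and $G[V \setminus L_i]$ of \dola{} exactly via the dynamic program of Bodlaender et al.~\cite{BFK+12} in times $O^*(2^i)$ and $O^*(2^{n-i})$. Concatenate the two optimal orderings (vertices of $L_i$ first, followed by those of $V \setminus L_i$), evaluate the total cost, and finally output the cheapest ordering found over all $i$.

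The running time is immediate. The $O(n)$ calls to \cref{prop:dkmc} contribute $O^*(2^{\omega n / 3})$ in total. Each iteration's exact DP costs $O^*(2^i + 2^{n-i})$, and over the chosen range this quantity is maximized at the two endpoints, yielding $O^*(2^{(1-\alpha/2)n})$. Summing the two contributions gives $O^*(2^{\omega n/3} + 2^{(1-\alpha/2)n})$, as required.

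For the approximation ratio I would combine two estimates. First, for any partition $(L, R)$ of $V$, restricting an optimal ordering $\pi^*$ to $L$ (respectively $R$) can only compress positions of vertices and hence cannot increase the stretch of any arc internal to that side; this certifies $\opt(G[L]) + \opt(G[R]) \le \opt(G)$, since the two sums of internal stretches are disjoint subsets of the backward-arc stretches counted by $\opt(G)$. Second, I would use an averaging argument: the chosen range contains at least $(1-\alpha) n$ positions, so $\sum_{i \in [\alpha n/2,\, (1-\alpha/2) n]} |\cut_{\pi^*}^i| \le \opt$ implies by pigeonhole a position $i^*$ in the range with $|\cut_{\pi^*}^{i^*}| \le \opt / ((1-\alpha) n)$. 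Feeding the candidate $\{v : \pi^*(v) \le i^*\}$ into \cref{prop:dkmc} shows that the minimum-cut value $c^*$ computed for split $i^*$ satisfies $c^* \le \opt / ((1-\alpha) n)$. In the concatenated ordering for this split each cut arc has stretch at most $n-1$, so the cut arcs contribute at most $c^*(n-1) \le \opt / (1-\alpha)$; combining with the internal-ordering bound yields a total of at most $(1 + 1/(1-\alpha)) \opt$.

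The delicate point will be calibrating the iteration range. One needs enough positions for the averaging argument to produce the factor $1/(1-\alpha)$ on the cut arcs, yet the range must be tight enough that every invocation of the exact DP has at most $(1 - \alpha/2) n$ vertices on the larger side so as to stay within the stated time budget; the interval $[\alpha n /2,\, (1-\alpha/2) n]$ is precisely the balance point between these two requirements, which is why the approximation factor and running-time exponent are linked through the same parameter $\alpha$.
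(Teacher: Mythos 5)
Your proposal is correct and takes essentially the same route as the paper: sweep \cref{prop:dkmc} over split sizes in $[\alpha n/2,\,(1-\alpha/2)n]$, use a pigeonhole argument over the at least $(1-\alpha)n$ positions in that range to find a cheap cut of cost at most $\opt/((1-\alpha)n)$, bound the internal costs by $\opt(G[L])+\opt(G[R])\le\opt$ via restriction of the optimal ordering, and charge each cut arc stretch at most $n$. The only (immaterial) difference is that you solve the two induced \dola{} subinstances for every split size and keep the best overall ordering, whereas the paper first identifies the single split size attaining the minimum cut over the range and then solves only those two subinstances; both give the identical approximation guarantee and the identical $O^*$-time bound.
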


\begin{proof}
    Let~$G=(V,A)$ be an input to \dola.
    We use \cref{prop:dkmc} to solve \dkmc{} for each~${\frac{\alpha n}{2} \leq k \leq n - \frac{\alpha n}{2}}$.
    This takes~$O^*(2^{\frac{\omega n}{3}})$ time.    
    Let~$k'$ be a value where the number of arcs from~$V \setminus L$ to~$L$ are minimum for some set~$L$ of size~$k'$.
    Let~$L'$ be a set of vertices of size~$k'$ minimizing the number of arcs from~${R' = V \setminus L'}$ to~$L'$ and let~$A'$ be the set of arcs from~$R'$ to~$L'$.
    By the pigeonhole principle, the size of~$A'$ is at most~$\frac{\opt}{(1-\alpha)n}$ where~$\opt$ is the optimal directed-optimal-linear-arrangement value for~$G$ (the value we want to minimize in \dola).
    Note that~$|L'|,|R'| \leq (1 - \frac{\alpha}{2})n$.
    Next, we solve the instances~$G[L]$ and~$G[R]$ optimally in~$O^*(2^{(1 - \frac{\alpha}{2})n})$~time using the algorithm by Bodlaender et al.~\cite{BFK+12}.
    To get an approximation for the whole graph~$G$, we first order all vertices in~$L$ according to the computed solution for~$G[L]$ and then all vertices in~$R$ according to the computed solution for~$R$.
    Note that this is an optimal solution for the graph~$G'=(V,A\setminus A')$.
    It remains to add the arcs between~$L$ and~$R$.
    Note that each edge in~$A'$ can contribute at most~$n$ and therefore the edges in~$A'$ can contribute at most~$\frac{\opt}{1-\alpha}$ in total.
    Arcs from~$L$ to~$R$ do not contribute as they are forward arcs in the computed ordering.
    Added to the cost for~$G'$ (which is at most~$\opt$), this yields the claimed approximation factor.
    The running time is in~$O^*(2^{\frac{\omega n}{3}} + 2^{(1 - \frac{\alpha}{2})n})$.
\end{proof}

We mention that the above result yields a~$2.72$-approximation in~$O^*(2^{\frac{\omega n}{3}})$~time, assuming the currently known value of $\omega < 2.372$.
We next show how to improve the approximation factor to~$(\frac{3}{2}+\varepsilon)$ in the undirected setting.
The general strategy is the same but in the end, we use the fact that an ordering and its reverse have the same value in terms of \ola{} and leverage this fact to argue that  contribution of each edge between the two parts is at most~$\frac{n}{2}$ on average.

\begin{proposition}
    For each constant~$0 < \alpha < 1$, there is a~$(1+\frac{1}{2(1-\alpha)})$-approximation for \ola{} that runs in~$O^*(2^{\frac{\omega n}{3}} + 2^{(1-\frac{\alpha}{2})n})$ time, where~$\omega$ is the matrix-multiplication exponent.
\end{proposition}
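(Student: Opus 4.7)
The plan is to mirror the strategy in the proof of \cref{prop:dola} for the undirected setting, but with one crucial twist in how the two partial orderings are combined. I would first invoke \cref{prop:dkmc} (applied to the directed graph obtained by replacing each undirected edge with two anti-parallel arcs) for every value $k \in [\alpha n/2, n - \alpha n/2]$, finding in total $O^*(2^{\omega n/3})$ time a partition $L' \sqcup R' = V$, with $|L'| = k'$ for some $k'$ in the allowed range, that minimizes the number $|E'|$ of edges between the two parts. Since the optimal OLA ordering $\pi^*$ satisfies $\sum_{i=1}^{n-1}|\cut_{\pi^*}^i| = \opt$, pigeonholing across the $\Theta((1-\alpha)n)$ positions in $[\alpha n/2, n-\alpha n/2]$ guarantees one $i^*$ in this range with $|\cut_{\pi^*}^{i^*}| \leq \opt/((1-\alpha)n)$; taking the first $i^*$ vertices of $\pi^*$ is a feasible $(i^*, n-i^*)$-cut, so $|E'| \leq \opt/((1-\alpha)n)$. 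I would then solve the two instances $G[L']$ and $G[R']$ of \ola{} exactly using the $O^*(2^n)$-time algorithm of Bodlaender et al.~\cite{BFK+12}; since $|L'|, |R'| \leq (1 - \alpha/2)n$, this step runs in $O^*(2^{(1-\alpha/2)n})$ time and yields orderings $\pi_L$ and $\pi_R$ with costs $\opt_L = \opt(G[L'])$ and $\opt_R = \opt(G[R'])$, respectively. A straightforward restriction argument shows that $\opt \geq \opt_L + \opt_R$.

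The new ingredient is that I would build \emph{two} candidate orderings from $\pi_L$ and $\pi_R$, and return the better one. The first candidate is $\pi^{(1)}$, which places $L'$ before $R'$: $\pi^{(1)}(u) = \pi_L(u)$ for $u \in L'$ and $\pi^{(1)}(v) = k' + \pi_R(v)$ for $v \in R'$. The second candidate $\pi^{(2)}$ swaps the order of the two blocks, placing $R'$ before $L'$ according to $\pi_R$ and then $\pi_L$. In both orderings the internal (within-$L'$ and within-$R'$) contributions are exactly $\opt_L + \opt_R$, so they cancel in averaging. For a cross edge $\{u,v\}$ with $u \in L'$ and $v \in R'$, the stretch in $\pi^{(1)}$ is $k' + \pi_R(v) - \pi_L(u)$, while the stretch in $\pi^{(2)}$ is $(n - k') + \pi_L(u) - \pi_R(v)$; these two values sum to exactly $n$, independently of $\pi_L(u)$ and $\pi_R(v)$. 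Summing over all edges of $E'$, the average of the two candidate costs equals $\opt_L + \opt_R + n|E'|/2$, so at least one of them is bounded by this average.

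Chaining the inequalities, the returned ordering has OLA cost at most
\[
\opt_L + \opt_R + \frac{n|E'|}{2} \;\leq\; \opt + \frac{n}{2}\cdot\frac{\opt}{(1-\alpha)n} \;=\; \Bigl(1 + \frac{1}{2(1-\alpha)}\Bigr)\opt,
\]
which matches the claimed approximation factor; the running time is dominated by the \dkmc{} calls and the two exact-OLA computations, giving $O^*(2^{\omega n/3} + 2^{(1-\alpha/2)n})$. The only mildly delicate point is the averaging step: one must reuse the \emph{same} internal orderings $\pi_L, \pi_R$ in both candidates so that the within-block contributions are identical, and simultaneously exploit the fact that flipping the two blocks cleanly transforms each cross-edge stretch $s$ into $n - s$. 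This symmetry is what replaces the worst-case $n$ bound on cross-edge stretches in \cref{prop:dola} with the average bound $n/2$, halving the contribution of $|E'|$ and tightening the approximation factor from $1 + 1/(1-\alpha)$ to $1 + 1/(2(1-\alpha))$.
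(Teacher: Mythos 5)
Your proof is correct and matches the paper's overall strategy (cut via \dkmc{}, solve both sides exactly, bound the cross-edge contribution by an averaging argument), but the averaging step itself uses a different pair of candidate orderings. The paper fixes the block order $L'$ before $R'$ and independently decides, for each block, whether to keep or reverse its internal ordering; it then bounds the cross-edge stretch ``inside $L'$'' by $\frac{|L'|-1}{2}$ and ``inside $R'$'' by $\frac{|R'|+1}{2}$ on average, giving $\frac{n}{2}$ per cross edge overall. You instead fix both internal orderings $\pi_L, \pi_R$ and build exactly two candidates by choosing which block goes first, using the clean identity that the two stretches of any cross edge sum to exactly $n$. Both arguments produce the same $\frac{n}{2}$ average and hence the same factor $1+\frac{1}{2(1-\alpha)}$, and in fact your second candidate is (up to the cost-preserving global reversal of the ordering) the paper's ``reverse both blocks'' candidate, so the two candidate families overlap. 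Your version is arguably tighter as an exposition since it needs only a single averaging step over two orderings rather than two independent per-block choices, while the paper's per-block formulation more directly exposes the symmetry ``an ordering and its reverse have the same OLA value'' that it advertises. One small remark: you realize the undirected $(k,n-k)$-cut by feeding the bidirected graph into \cref{prop:dkmc}; that works, but the paper just applies the same triangle-finding algorithm directly to the undirected graph, which is the special case anyway.
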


\begin{proof}
    Let~$G=(V,E)$ be an input graph for \ola.
    We start the same as in \cref{prop:dola}, that is, we solve \textsc{Minimum $(k,n-k)$-Cut} for each~${\frac{\alpha n}{2} \leq k \leq \frac{n}{2}}$, define sets~$L', R'$, and~$E'$ and value~$\opt$, and solve~$G[L']$ and~$G[R']$ optimally in~${O^*(2^{\frac{\omega n}{3}} + 2^{(1 - \frac{\alpha}{2})n})}$ time overall.
    Let~$i = |L'|$.
    We next test for both~$L'$ and~$R'$ whether the average length of edges in~$E'$ is smaller if we order the vertices according to the computed solution or in the reverse order.
    To this end, we pretend that the endpoint outside the respective part is fixed at position~$i$.
    See \cref{fig:ola} for an example.
    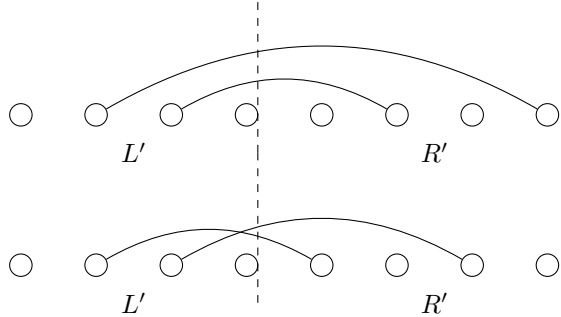
\begin{figure}
        \centering
        \begin{tikzpicture}
            \def\x{0}
            \def\y{-2}
            \node[circle,draw,inner sep=3pt] at(0,0) (1) {};
            \node[circle,draw,inner sep=3pt] at(1,0) (2) {};
            \node[circle,draw,inner sep=3pt] at(2,0) (3) {};
            \node[circle,draw,inner sep=3pt] at(3,0) (4) {};
            \node[circle,draw,inner sep=3pt] at(4,0) (5) {};
            \node[circle,draw,inner sep=3pt] at(5,0) (6) {};
            \node[circle,draw,inner sep=3pt] at(6,0) (7) {};
            \node[circle,draw,inner sep=3pt] at(7,0) (8) {};

            \node at(1.5,-.5) {$L'$};
            \node at(5.5,-.5) {$R'$};

            \draw[bend right=30] (8) to (2);
            \draw[bend right=30] (6) to (3);

            \draw[dashed] (3.15,1.5) to (3.15,-.5);

            \node[circle,draw,inner sep=3pt] at(0+\x,0+\y) (1) {};
            \node[circle,draw,inner sep=3pt] at(1+\x,0+\y) (2) {};
            \node[circle,draw,inner sep=3pt] at(2+\x,0+\y) (3) {};
            \node[circle,draw,inner sep=3pt] at(3+\x,0+\y) (4) {};
            \node[circle,draw,inner sep=3pt] at(4+\x,0+\y) (5) {};
            \node[circle,draw,inner sep=3pt] at(5+\x,0+\y) (6) {};
            \node[circle,draw,inner sep=3pt] at(6+\x,0+\y) (7) {};
            \node[circle,draw,inner sep=3pt] at(7+\x,0+\y) (8) {};

            \node at(1.5+\x,-.5+\y) {$L'$};
            \node at(5.5+\x,-.5+\y) {$R'$};

            \draw[bend right=30] (5) to (2);
            \draw[bend right=30] (7) to (3);

            \draw[dashed] (3.15+\x,1.5+\y) to (3.15+\x,-.5+\y);
        \end{tikzpicture}
        \caption{An example of \ola{} where the vertices are divided into two parts~$L'$ and~$R'$ with two edges between the two parts and some ordering within the two parts is fixed. For the sake of simplicity, we do not show edges within the two parts.
        The two orderings only differ in that we reverse the ordering of the vertices in~$R'$.
        The average length within~$L'$ of the two edges between~$L'$ and~$R'$ is 1.5 in both orderings. The average length within~$R'$ is 3 in the above ordering and~$2'$ in the below ordering.}
        \label{fig:ola}
    \end{figure}
    We pick for each of the two the ordering which results in shorter average lengths and observe that the average length of an edge in~$E'$ is now at most~$\frac{n}{2}$ as the average length ``in~$L'$'' is at most~$\frac{|L'|-1}{2}$, the average length ``in~$R'$'' is at most~$\frac{|R'|+1}{2}$, and~$|L'| + |R'| = n$.
    The combined cost for all edges in~$E'$ is therefore at most~$\frac{\opt}{(1-\alpha)n} \cdot \frac{n}{2} = \frac{\opt}{2(1 - \alpha)}$ and added to the cost for~$G'$ (which is at most~$\opt$), this yields the claimed approximation factor.
    The running time is in~$O^*(2^{\frac{\omega n}{3}} + 2^{(1 - \frac{\alpha}{2})n})$.
\end{proof}

The above results yields a~$1.86$-approximation in~$O^*(2^{\frac{\omega n}{3}})$~time using the currently known value of $\omega \approx 2.372$.

\subsection{Pathwidth}

We conclude this section with a $2$-approximation for \textsc{Directed Pathwidth} in~$O(1.66^n)$ time.
We point out that this result stands out from the previous in two ways: \textsc{Directed Pathwidth} can be solved in~$O(1.89^n)$ time and our algorithm is not based on \cref{prop:dkmc}.

\begin{theorem}
    \label{thm:pw}
    \textsc{Directed Pathwidth} can be 2-approximated in~$O(1.66^n)$ time.
\end{theorem}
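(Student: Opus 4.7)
The plan is to combine a balanced-separator observation with the $O(1.89^n)$-time exact algorithm of Kitsunai et al.~\cite{KKKTT16} for \textsc{Directed Pathwidth}. The key structural observation is that if $\mathrm{pw}(G) = k$, then any optimal ordering $\pi^*$ induces a balanced partition $V = L \sqcup R$ with $|L| = \lfloor n/2 \rfloor$ such that the ``in-boundary'' $B := \{v \in L : \exists u \in R,\ (u,v) \in A\}$ has size at most $k$. Writing $L = L^* \sqcup B$, the set $L^*$ admits no in-arc from $R$, which means $L^*$ is a topological prefix of the condensation of $G[V \setminus B]$.

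My algorithm would enumerate such partitions by first guessing $B \subseteq V$ of size at most some threshold $\alpha n$, and then enumerating balanced topological prefixes $L^* \subseteq V \setminus B$ of size $\lfloor n/2 \rfloor - |B|$. For each candidate partition, it computes optimal orderings of $G[L]$ and $G[R]$ using the exact algorithm of Kitsunai et al., and concatenates them to form an ordering of $G$. By monotonicity of directed pathwidth under induced subgraphs, $\mathrm{pw}(G[L]), \mathrm{pw}(G[R]) \le k$, so each exact call produces a subordering of pathwidth at most $k$. One checks that in the concatenated ordering, the bag at any position consists of the corresponding subgraph's bag together with at most $|B|$ additional vertices from $B$ (namely those with in-arcs crossing the cut), for a total of at most $2k$; in the right half this is guaranteed because $L^*$-vertices have no in-arcs from $R$ and therefore never appear in later bags.

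The main obstacle is bounding the total running time by $O(1.66^n)$. The exact-algorithm calls on each half cost $O(1.89^{n/2})$, and enumerating $B$ of size up to $\alpha n$ contributes $O(\binom{n}{\alpha n})$ candidates; tuning $\alpha$ so that their product is $O(1.66^n)$, while ensuring that enough balanced topological prefixes exist (possibly by augmenting $B$ with extra vertices when the condensation of $G[V \setminus B]$ is too rigid to admit a balanced prefix), is the technical crux. The complementary regime, where the optimum pathwidth exceeds the threshold, needs a separate argument---either a direct 2-approximation via a polynomial-time heuristic whose output is within a factor of $2$ of $k$ when $k$ is a constant fraction of $n$, or a tailored enumeration in which a coarser separator still yields enough slack in the analysis. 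I expect the careful case analysis for large $k$, together with the argument that balanced topological prefixes of the right size can always be found after slightly inflating $B$, to be the most delicate part of the proof.
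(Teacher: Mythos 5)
Your approach has an acknowledged but unresolved gap that is fatal as stated: the enumeration of the ``in-boundary'' $B$ in $\binom{n}{\alpha n}$ time is only affordable when $|B| \le \alpha n$, but $|B|$ is bounded only by the (unknown) optimum pathwidth $k$, which can be $\Theta(n)$. For the large-$k$ regime you gesture at ``a polynomial-time heuristic whose output is within a factor of $2$'' or ``a tailored enumeration,'' but no such polynomial-time 2-approximation is known for directed pathwidth (it would be a breakthrough in its own right), and the enumeration you would need is exactly the $O^*(2^n)$ blowup you are trying to avoid. There is a secondary unbounded quantity as well: the number of balanced topological prefixes $L^*$ of $V \setminus B$ can be exponential, and when none exists you propose to ``inflate $B$,'' which again can force $|B| = \Theta(n)$. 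A quantitative check also shows how tight things are: $1.89^{n/2} \approx 1.375^n$, so reaching $O(1.66^n)$ requires $\binom{n}{\alpha n} \le (1.66/1.375)^n \approx 1.207^n$, i.e.\ $\alpha \lesssim 0.05$, a very restrictive threshold.

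The paper's proof takes a genuinely different route that sidesteps all of this. It uses an \emph{unbalanced} split $|V'| = \alpha n$ with $\alpha \approx 0.204$ and never enumerates a crossing boundary. Instead it runs the prefix dynamic program of Bodlaender et al.\ over all subsets of size at most $\alpha n$ to directly find an ordered prefix $V'$ of $\alpha n$ vertices with the property that, for every extension and every position $i$, at most $\opt$ of the vertices at positions $\le \min(i,\alpha n)$ have in-neighbors beyond $i$; this costs $O^*\bigl(\binom{n}{\alpha n}\bigr)$ and is correct regardless of how large $\opt$ is, since the true prefix of an optimal ordering is always a feasible candidate. It then solves $G[V \setminus V']$ exactly via Kitsunai et al.\ in $O^*(1.89^{(1-\alpha)n})$ and concatenates; any bag at position $i > \alpha n$ receives at most $\opt$ vertices from $V'$ and at most $\opt$ from $V \setminus V'$, giving the factor $2$. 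Balancing $\binom{n}{\alpha n}$ against $1.89^{(1-\alpha)n}$ yields $\alpha \approx 0.204$ and $O(1.66^n)$ time. Notably, the paper explicitly remarks that this algorithm is \emph{not} based on its balanced-cut subroutine; trying to reuse the balanced-cut template here, as you do, is what creates the unbounded-boundary problem.
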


\begin{proof}
    Let~$G=(V,A)$ be the input graph for which we want to approximate the pathwidth, let~${\alpha \approx 0.204}$ be the solution to the equation~$1.89^{1-\alpha} = (\frac{1}{\alpha})^\alpha(\frac{1}{1-\alpha})^{1-\alpha}$, and let~$\opt$ be the (unknown) directed pathwidth of~$G$.
    We first use the algorithm by Bodlaender et al.~\cite{BFK+12} to find a set~$V'$ of~$\alpha n$ vertices (and an ordering of these vertices) such that independent of the ordering of the remaining~$(1-\alpha)n$ vertices, it holds for each position~$i$ that at most~$\opt$ vertices before position~$\min(i,\alpha n)$ have in-neighbors after position~$i$.
    This takes~${O^*((\frac{1}{\alpha})^{\alpha n}(\frac{1}{1-\alpha})^{(1-\alpha)n})}$ time (which is~$O^*(1.89^{(1-\alpha)n})$ by the definition of~$\alpha$).
    Next, we solve the instance~$G[V \setminus V']$ of \textsc{Directed Pathwidth} optimally in~$O^*(1.89^{(1-\alpha)n})$ time~\cite{KKKTT16}.
    We return the ordering that places the vertices in~$V'$ first (in the order that was computed before) and then all vertices in~$V \setminus V'$ in the order corresponding to the solution for~$G[V \setminus V']$.
    Note that it holds for each position~$i > \alpha n$ that at most~$\opt$ vertices before position~$\alpha n$ have in-neighbors after position~$i$ and at most~$\opt$ vertices between position~$\alpha n +1$ and~$i$ have in-neighbors after position~$i$, that is, the computed ordering is a $2$-approximation.
    The running time is in~${O^*(1.89^{(1-\alpha)n}) \approx O^*(1.89^{(0.796)n}) \subseteq O(1.66^n)}$.
    This concludes the proof.
\end{proof}

\section{Weighted Problems}
\label{sec:weighted}
In this section, we consider weighted generalizations of the problems studied in the last section.
We denote the largest weight in the input by~$\wmax$.
We mention that if~$\wmax$ is polynomially upper-bounded in~$n$, then all of our previous results generalize in a straight-forward way.
Hence, we mainly focus on cases where~$\wmax$ is large compared to~$n$.
An assumption that is often made when dealing with weighted problems is that all numbers can be added and subtracted in constant time.
However, in the realm of very large weights, this assumption is highly unrealistic as even reading the weight from input takes~$O(\log(\wmax))$ time (which we sometimes hide in the~$O^*$-notation).
We decided to not make the above assumption which basically adds an additional~$O(\log(\wmax))$ factor to all running times.

We mention that we do not study a weighted version of \textsc{Directed Pathwidth}.
To the best of our knowledge, this problem has only been studied once~ \cite{MT09} and our algorithm behind \cref{thm:pw} computes a 2-approximation even in the weighted setting if we have an exact (slower) algorithm.
We suspect that the~$O(1.89^n)$-time algorithm by Kitsunai et al.~\cite{KKKTT16} generalizes to the weighted setting but this was never proven and proving it here would go beyond the scope of this paper.

We start with an arc-weighted generalization of \dkmc.
We mention that if this generalization can be solved in~$O^*(2^{\frac{\omega n}{3}})$ time, then our previous algorithms for \fas, \dola, and \textsc{Directed Cutwidth} can all easily be generalized to the weighted setting.
Unfortunately, even for (undirected) \textsc{Weighted Max Cut}, no such algorithm is known.
However, Williams~\cite{Williams04} showed how to compute a~$(1+\varepsilon)$-approximation for \textsc{Weighted Max Cut} in~$O^*(2^{\frac{\omega n}{3}} (\frac{1}{\varepsilon})^3)$ time.
We will generalize this result to \wdkmc{} and show how to use this approximation for the other problems.

\begin{proposition}
    \label{prop:wdkmc}
     For any~$\varepsilon>0$, \wdkmc{} can be~$(1+\varepsilon)$-approximated in~${O(2^{\frac{\omega n}{3}} \cdot \log(\wmax) \cdot (\frac{1}{\varepsilon})^2\cdot k^2)}$ time.
\end{proposition}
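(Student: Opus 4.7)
The plan is to generalize the tripartite-triangle construction from \cref{prop:dkmc} to handle arc weights by combining it with Williams's ``guess-and-round'' approximation trick for weighted triangle-type problems. I would begin by replicating the partition of $V$ into three balanced parts $V_1, V_2, V_3$, enumerating the distribution $(k_1, k_2, k_3)$ with $k_1 + k_2 + k_3 = k$ (contributing an $O(k^2)$ factor, since $k_3$ is determined by the first two), and building the tripartite auxiliary graph $H$ exactly as before, except that edge weights now aggregate \emph{arc weights} between the corresponding subsets rather than arc counts. The same correctness argument as in \cref{prop:dkmc} shows that a minimum-weight triangle in $H$ corresponds to a minimum-weight directed $(k, n-k)$-cut respecting the prescribed sizes per part.

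The difficulty is that edge weights in $H$ can now be as large as $O(n^2 \cdot \wmax)$, so the exact minimum-triangle method of \cref{prop:dkmc}, which iterates over all possible edge weights explicitly, no longer runs in the desired time. To circumvent this, I would use a two-level scaling argument. In the outer layer, guess a geometric estimate $\hat W = (1+\varepsilon)^j$ of the true minimum triangle weight $\opt_H$; since $\opt_H \leq O(n^2 \wmax)$, there are $O(\log(\wmax)/\varepsilon)$ candidates. For each $\hat W$, prune edges of $H$ whose weight exceeds $(1+\varepsilon)\hat W$ (these cannot participate in a triangle of total weight at most $(1+\varepsilon)\hat W$), and round each remaining weight up to the nearest multiple of $\varepsilon\hat W / c$ for a suitable constant $c$; the rescaled weights then lie in $\{0, 1, \ldots, M\}$ with $M = O(1/\varepsilon)$.

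The inner subroutine becomes minimum-weight triangle on a graph with $N = O(2^{n/3})$ vertices and integer edge weights in $[0, M]$, with $M = O(1/\varepsilon)$. I would handle this by layering the adjacency matrix of $H$ by weight class and invoking a small number of boolean triangle detections via fast matrix multiplication, yielding $O^*(2^{\omega n/3} / \varepsilon)$ time. Multiplying the three factors---$O(k^2)$ for the size-tuple enumeration, $O(\log(\wmax)/\varepsilon)$ for the outer weight guess, and $O^*(2^{\omega n/3}/\varepsilon)$ for the inner triangle routine---matches the claimed running time after absorbing polynomial factors in the $O^*$ notation.

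The main obstacle is the approximation-ratio analysis. For the (unknown) correct index $j$ with $(1+\varepsilon)^j \leq \opt_H < (1+\varepsilon)^{j+1}$, I must show that neither the pruning step nor the rounding step destroys the optimum triangle by more than an additive $O(\varepsilon \opt_H)$ across its three edges, so the smallest value returned over all outer guesses is within $(1 + O(\varepsilon)) \opt_H$; then rescaling $\varepsilon$ by a constant gives the claimed $(1+\varepsilon)$-approximation. The argument mirrors Williams's minimization analogue for \textsc{Weighted Max Cut}, with the extra bookkeeping for the per-part size constraint inherited from \cref{prop:dkmc}.
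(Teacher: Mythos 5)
Your proposal is correct and follows essentially the same approach as the paper: the tripartite auxiliary-graph construction reducing \wdkmc{} to minimum-weight triangle, followed by geometric scaling and rounding to approximate the minimum-weight triangle within a $(1+\varepsilon)$ factor. The only difference is that the paper invokes Zwick's approximate minimum-weight-triangle algorithm as a black box (which internally does precisely the scaling/rounding argument you spell out), whereas you re-derive that subroutine from scratch.
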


\begin{proof}
    Let~$(G=(V,A),k)$ be an input instance and let~$\varepsilon>0$ be an arbitrary but fixed value.
    We partition~$V$ arbitrarily into three sets~$V_1,V_2$, and~$V_3$ of (roughly) equal size.
    Next, we guess in~$O(k^2)$ time how many vertices of~$V_1, V_2$, and~$V_3$ belong to an optimal solution~$A$.
    Let these numbers be~$k_1,k_2$ and~$k_3 = k - k_1 - k_2$.
    Next, we build an undirected graph~$H$ with edge weights as follows.
    For each~$i \in [3]$, we introduce a vertex~$v_j^i$ for each subset~$V_j^i \subseteq V_i$ with~$|V_j^i| = k_i$.
    Note that the number of vertices in~$H$ is in~$O(2^{\nicefrac{n}{3}})$.
    For each vertex~$v_j^i$, let~$d(v_j^i)$ denote the sum of weights of all arcs from vertices in~$V^i_j$ to vertices in~$V_i \setminus V_j^i$ in~$G$.
    For each pair of vertices~$v_j^i,v_{j'}^{i'}$ with~$i \neq i'$, we add an edge between them of weight
    \[\frac{d(v_j^i)+d(v_{j'}^{i'})}{2} + \sum_{\{(u,w) \in A \mid u \in V_j^i \land w \in V_{i'} \setminus V_{j'}^{i'}\} \cup \{(u,w) \in A \mid u \in V_{j'}^{i'} \land w \in V_{i} \setminus V_{j}^{i}\}}w((u,w)).\]
    Using an algorithm by Zwick~\cite{Zwick02} (see also Williams~\cite{Williams04}), we find a~$(1+\varepsilon)$-approximation for the minimum-weight triangle in~$H$ in~${O(2^{\frac{\omega n}{3}} \frac{1}{\varepsilon} \log(\frac{n^2\wmax}{\varepsilon})) \subseteq O(2^{\frac{\omega n}{3}} (\frac{1}{\varepsilon})^2 \log(\wmax))}$ time.
    Note that~$\log(n^2\wmax) \leq \log(\wmax^3) \leq 3\log(\wmax)$ as we assume~$\wmax > n$.
    Moreover, each triangle in~$H$ corresponds to a directed~$(k,n-k)$-cut in~$H$ of the same weight and each directed~$(k,n-k)$-cut with~$k_i$ vertices in~$V_i$ for each~$i \in [3]$ corresponds to a triangle of the same weight in~$H$. 
    Hence, we compute a~$(1+\varepsilon)$-approximation for \wdkmc.
    The total running time is in~${O(2^{\frac{\omega n}{3}} \log(\wmax) (\frac{1}{\varepsilon})^2 k^2)}$ and this concludes the proof.
\end{proof}

\subsection{Feedback Arc Set}

We next show how to use \cref{prop:wdkmc} to get faster approximation algorithms for the weighted variants of the other vertex-ordering problems we consider.
We start with \wfas.
Similar to the unweighted case, we first show a constant factor approximation and then show a boosting technique to achieve any~$(1+\varepsilon)$-approximation.

\begin{proposition}
    \label{cor:3approx}
    \wfas{} admits a 3-approximation that can be computed in~$O(2^{\frac{\omega n}{3}} \log(\wmax) k^2)$ time.
\end{proposition}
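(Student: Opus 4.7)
The plan is to mimic the proof of the unweighted $2$-approximation (Lemma~\ref{cor:2approx}), substituting the exact algorithm for \dkmc{} with the $(1+\varepsilon)$-approximation of Proposition~\ref{prop:wdkmc}, and absorbing the resulting slack into the approximation factor by taking $\varepsilon$ to be a (large enough) constant.

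Concretely, I would start by invoking Proposition~\ref{prop:wdkmc} with $\varepsilon := 1$ and $k := \lfloor n/2 \rfloor$ on the input graph $G=(V,A,w)$ to obtain, in $O(2^{\omega n/3}\cdot \log(\wmax)\cdot k^{2})$ time, a set $L\subseteq V$ of size $k$ whose incoming arc weight $w(A')$, where $A' = \{(u,v)\in A \mid u\in V\setminus L,\ v\in L\}$, is at most $2\cdot\opt_{\mathrm{cut}}$, with $\opt_{\mathrm{cut}}$ denoting the minimum weight of an $(k,n-k)$-cut in $G$. The key observation here is that any ordering $\pi$ witnessing the optimum for \wfas{} induces a partition at position $\lfloor n/2 \rfloor$ whose cut is a valid $(k,n-k)$-cut whose weight is at most the weight of the backward arcs of $\pi$; hence $\opt_{\mathrm{cut}}\le \opt$, where $\opt$ is the optimal value of \wfas{} on $G$. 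Consequently $w(A')\le 2\opt$.

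Next I would compute an optimal ordering of $G[L]$ and of $G[V\setminus L]$ for \wfas{} using the Held-Karp dynamic program of \cite{HK61}, adapted to weights; since both subgraphs have at most $\lceil n/2\rceil$ vertices, this takes $O^*(2^{n/2}\cdot\log(\wmax))$ time, which is subsumed by the $O(2^{\omega n/3}\log(\wmax)k^{2})$ bound. Because any feedback arc set of $G$ restricted to $G[L]$ (respectively $G[V\setminus L]$) is itself a feedback arc set of that induced subgraph, the sum of the two optimal values is at most $\opt$.

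Finally, I would concatenate the two orderings, placing the vertices of $L$ (in the order computed for $G[L]$) before the vertices of $V\setminus L$ (in the order computed for $G[V\setminus L]$). The backward arcs of the resulting ordering are exactly the backward arcs inside $G[L]$ or inside $G[V\setminus L]$, together with the arcs of $A'$; arcs from $L$ to $V\setminus L$ are forward and contribute nothing. Summing gives total backward weight at most $\opt + 2\opt = 3\opt$, proving the claimed approximation factor. There is no serious obstacle here; the only point to watch is that the $(1+\varepsilon)$ slack from the cut approximation must be turned into an additive constant loss, which is why we set $\varepsilon$ to a constant rather than let it tend to $0$, and that polynomial factors in $n$ and $\log(\wmax)$ are propagated correctly through the Held-Karp subroutine.
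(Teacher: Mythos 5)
Your proposal matches the paper's proof essentially verbatim: both invoke \cref{prop:wdkmc} with $\varepsilon=1$ and $k=\lfloor n/2\rfloor$ to obtain a set $L$ whose incoming cut weight is at most $2\opt$, solve $G[L]$ and $G[V\setminus L]$ optimally by Held--Karp, and concatenate to obtain total backward weight at most $\opt + 2\opt = 3\opt$ within the claimed running time. No meaningful differences.
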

    
\begin{proof}
    Let~$G=(V,A)$ be an input graph for \wfas.
    We first compute a 2-approximation~$L$ for the instance~$(G,\lfloor\frac{n}{2}\rfloor)$ of \wdkmc{} in~$O(2^{\frac{\omega n}{3}} \log(\wmax) k^2)$~time using \cref{prop:wdkmc} with~$\varepsilon = 1$.
    Note that half the sum of weights of arcs from~$R = V\setminus L$ to~$L$ is a lower bound for the size of a minimum feedback arc set.
    We refer to \cref{fig:fas} for an illustration and mention that this time the weight of all red arcs is a $2$-approximation for an optimal cut.
    Next, we solve the two instances~$G[L]$ and~$G[R]$ of \wfas{} optimally in~${O(2^{\frac{n}{2}} n^2 \log(\wmax))}$~time~\cite{HK61}.
    Placing the vertices of~$L$ in the computed order before the vertices of~$R$ in the computed order yields a~\mbox{3-approximation} as the weights of the feedback arcs within both instances are combined also a lower bound for the weight of a minimum-weight feedback arc set in~$G$.
    The running time is in~${O(2^{\frac{\omega n}{3}} \log(\wmax) k^2)}$ as~$\frac{\omega}{3} > \frac{1}{2}$.
\end{proof}

We next present the boosting technique in the weighted setting.
It is very similar to the unweighted setting with two minor adjustments: the formulas change slightly due to the fact that we start with a 3-approximation rather than a 2-approximation and an additional factor of~$O(\log(\wmax))$ is used when using the algorithm by Bodlaender et al.~\cite{BFK+12}.
We present the entire proof for the sake of completeness.

\begin{theorem}
    For each~$\varepsilon > 0$, there is a~$\delta > 0$ such that \wfas{} can be~$(1+\varepsilon)$-approximated in~$O((2-\delta)^n \cdot \log(\wmax))$ time.
\end{theorem}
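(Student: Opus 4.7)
The plan is to mimic the proof of Theorem~\ref{thm:fas} essentially line by line, with two cosmetic adjustments: the induction now starts from the $3$-approximation of Proposition~\ref{cor:3approx} rather than the $2$-approximation of Lemma~\ref{cor:2approx}, and every running-time bound carries an additional $O(\log(\wmax))$ factor coming from arithmetic on weights. Specifically, I would prove by induction on $k \geq 1$ that there exist a $\delta_k > 0$ and a sequence $\alpha_1 > \alpha_2 > \dots$ tending to $1$ such that \wfas{} admits an $\alpha_k$-approximation in $O((2-\delta_k)^n \log(\wmax))$ time. Solving the same recursion that appears implicitly in the case analysis of Theorem~\ref{thm:fas}, namely $\alpha_{k+1} = (2\alpha_k - 1)/\alpha_k$, with initial value $\alpha_1 = 3$ gives $\alpha_k = 1 + 2/(2k-1)$, which tends to $1$; hence any target $\varepsilon > 0$ is achievable by choosing $k$ large enough. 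The base case is immediate from Proposition~\ref{cor:3approx} combined with $2^{\omega/3} < 2$.

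For the inductive step, I would define $\gamma \geq 2$ and $\alpha = 1/\gamma$ via the same equation as in Theorem~\ref{thm:fas} (substituting the inductive $\delta_k$) and then run the identical algorithm: use the Bodlaender et al.~\cite{BFK+12} dynamic program, now paying a $\log(\wmax)$ overhead per arithmetic step, to compute simultaneously, for every $V' \subseteq V$ of size $\alpha n$, both the exact minimum weighted feedback-arc weight on $G[V']$ and the total weighted cut $a(V')$ from $V \setminus V'$ to $V'$; identify $V^* \in \mathcal{V}$ minimizing $a(V^*)$; solve $G[V \setminus V^*]$ exactly; and apply the inductive $\alpha_k$-approximation to $G[V \setminus V']$ for every other $V' \in \mathcal{V}$. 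Return the best resulting ordering. Correctness follows from the same two-case analysis: either $a(V^*) \leq (\alpha_{k+1}-1)\opt$, in which case stacking the exact subproblem values and $a(V^*)$ gives at most $\alpha_{k+1}\opt$; or every $a(V') > (\alpha_{k+1}-1)\opt$, in which case taking $V''$ to be the first $\alpha n$ vertices of a topological ordering of $G$ minus an optimal feedback arc set bounds the output by $(1-c)\opt + \alpha_k c \opt$ with $c < 1 - (\alpha_{k+1}-1)$, and the choice of the recursion $\alpha_{k+1} = (2\alpha_k-1)/\alpha_k$ is precisely calibrated so that this is at most $\alpha_{k+1}\opt$.

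The running-time calculation is the same binomial-asymptotic computation as in the unweighted proof, carrying one extra $\log(\wmax)$ factor throughout, which yields $O((2-\delta_{k+1})^n \log(\wmax))$ for any $0 < \delta_{k+1} < 2 - 2^{1-\alpha}$. I do not expect any real obstacle; the only point worth verifying carefully is that Bodlaender et al.'s dynamic program still performs $O^*(2^n)$ arithmetic operations in the weighted setting, each on $O(\log(\wmax))$-bit numbers, so that its overall cost is $O^*(2^n \log(\wmax))$. Once this is noted, the rest of the argument is a mechanical translation of Theorem~\ref{thm:fas}.
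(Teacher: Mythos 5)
Your proposal is correct and follows essentially the same route as the paper: the identical ``partition, recurse, and combine'' algorithm based on Bodlaender et al.'s dynamic program, the identical two-case analysis on $a(V^*)$ versus a fraction of~$\opt$, and the identical binomial running-time calculation with a $\log(\wmax)$ overhead. The only (cosmetic) difference is the choice of the approximation-factor sequence: the paper proves $1 + \tfrac{2}{k}$ at step $k$ using a slightly slack bound, whereas you use the tight recursion $\alpha_{k+1} = (2\alpha_k - 1)/\alpha_k$, giving $\alpha_k = 1 + \tfrac{2}{2k-1}$; both tend to $1$, so the scheme is the same.
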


\begin{proof}
    We prove this statement by showing via induction that for each~$k \geq 1$, there is a~$\delta > 0$ such that \wfas{} can be~$(1+\frac{2}{k})$-approximated in~$O((2-\delta)^n\log(\wmax))$ time.
    Since any~$\varepsilon > 0$ is larger than~$\frac{2}{k}$ for some value of~$k$, this will prove the theorem.
    Note that the base case~$k=1$ is proven by \cref{cor:3approx}.
    It remains to show the induction step, that is, assuming that the statement is true for some value of~$k$, we need to show that the statement is also true for~$k+1$.
    To this end, let~$k$ be arbitrary but fixed and assume that \wfas{} can be~$(1+\frac{2}{k})$-approximated in~$(2-\delta_k)^n$ time for some~$\delta_k > 0$.
    Let~$\gamma$ be the solution to the equation \[\frac{\gamma \log(\gamma) - (\gamma - 1) \log(\gamma - 1)}{\gamma - 1} = 1 - \log(2-\delta_k),\] where all logarithms use base 2, and let~$\alpha = \frac{1}{\gamma}$.
    
    We next describe how to get the~$(1+\frac{2}{k+1})$-approximation for \wfas.
    Let~${G = (V,A)}$ be the input graph and let~$n$ be the number of vertices in~$V$.
    For the sake of simplicity, we will again assume that~$\alpha n$ is an integer.
    Let~$\mathcal{V}$ be the set of all vertices with exactly~$\alpha n$ vertices.
    Using the algorithm by Bodlaender et al.~\cite{BFK+12}, we compute the optimal solution for all graphs induced by sets in~$\mathcal{V}$.
    This takes~$O(\binom{n}{\alpha n} \poly(n) \log(\wmax))$ time.
    In the same time, we can also compute for each~$V' \in \mathcal{V}$ the sum of weights of arcs in~${\{(u,v) \mid u \in V \setminus V' \land v \in V'\}}$.
    Let~$a(V')$ denote this number and let~$V^* \in \mathcal{V}$ be a set minimizing~$a(V^*)$.
    We then use the algorithm by Bodlaender et al.~\cite{BFK+12} to solve the instance~$G[V \setminus V^*]$ optimally in~$O(2^{(1-\alpha)n} \poly(n) \log(\wmax))$ time and for all other~$V' \in \mathcal{V}$, we use the~$(1 + \frac{2}{k})$-approximation for~$G[V \setminus V']$ that runs in~${O((2-\delta_k)^{(1-\alpha)n})}$ time.
    Finally, we for each~$V' \in \mathcal{V}$, we compute the sum of the optimal value for~$G[V']$ plus~$a(V')$ plus the size of the computed solution for~$G[V \setminus V']$.
    We return the smallest number found.
    
    It remains to show that our algorithm always produces a~$(1+\frac{2}{k+1})$-approximation and to analyze the running time.
    For the approximation factor, let~$F$ be a feedback arc set of~$G$ of minimum weight~$\opt$.
    We distinguish between the following two cases.
    Either~${a(V^*) \leq \frac{2}{k+1} \opt}$ or~$a(V^*) > \frac{2}{k+1} \opt$.
    In the first case, note that the solution we computed for~$V^*$ consists of optimal solutions for~$G[V^*]$ and~${G[V\setminus V^*]}$ plus~$a(V^*)$.
    Since the first two summands are combined at most~$\opt$ and~${a(V^*) \leq \frac{2}{k+1} \opt}$, we indeed return a feedback arc set of size at most~${(1 + \frac{2}{k+1})\opt}$.
    If~$a(V^*) > \frac{2}{k+1} \opt$, then it holds for each~$V' \in \mathcal{V}$ that~${a(V') > \frac{2}{k+1} \opt}$.
    Moreover, since the graph~$G' = (V,A\setminus F)$ is acyclic, it has a topological ordering.
    Let~$V''$ be the set of the first~$\alpha n$ vertices in that topological ordering.
    Let~$d$ be the weight of a feedback arc set of minimum weight in~$G[V \setminus V'']$.
    Note that~$d \leq \frac{k-1}{k+1}\opt$.
    The solution we compute for~$V''$ has size~${(\opt - d) + (1+\frac{2}{k}) d  \leq (1+\frac{2(k-1)}{k(k+1)})\opt < (1+\frac{2}{k+1})\opt}$.
    Thus also in this case, we return a~$(1+\frac{2}{k+1})$-approximation.
    
    We conclude with analyzing the running time.
    Recall that~$\gamma = \frac{1}{\alpha}$.
    Computing~$\mathcal{V}$,~$a(V')$, and the optimal solution for~$G[V']$ for each~$V' \in \mathcal{V}$ takes~$O(\binom{n}{\alpha n} \poly(n) \log(\wmax))$ time.
    Solving the instance~$G[V \setminus V^*]$ optimally takes~$O(2^{(1-\alpha)n}  \poly(n) \log(\wmax))$ time.
    Approximating the solutions for all sets in~$\mathcal{V}$ takes $O(\binom{n}{\alpha n} \cdot (2-\delta_k)^{(1-\alpha)n} \cdot \log(\wmax))$ time in total.
    As shown in the proof for \cref{thm:fas}, it holds that~${O(\binom{n}{\alpha n} \cdot (2-\delta_k)^{(1-\alpha)n}) \subseteq O(2^{(1-\alpha)n})}$.
    Thus, the total running time of our algorithm is in~$O^*(2^{(1-\alpha)n})$ and in~$O((2-\delta)^n\log(\wmax))$ for any $0 < \delta < 2-2^{1-\alpha}$.
    This concludes the proof.
\end{proof}

\subsection{Cutwidth and Optimal Linear Arrangement}
In this final part, we generalize our algorithms for \textsc{Directed Cutwidth} and \dola{} to the weighted setting.
Since all generalizations follow the same simple structure, we only show the result for \textsc{Directed Cutwidth}.

\begin{proposition}
    \wcw{} admits a~$(2+\varepsilon)$-approximation in time $O(2^{\frac{\omega n}{3}} \log(\wmax) (\frac{1}{\varepsilon})^2k^2)$.
\end{proposition}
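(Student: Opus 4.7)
The plan is to directly adapt the proof of \cref{prop:cw} from the unweighted setting, substituting the exact algorithm for \dkmc{} with the $(1+\varepsilon')$-approximation for \wdkmc{} guaranteed by \cref{prop:wdkmc}, for a suitable $\varepsilon' = \varepsilon$. Concretely, given an input $(G=(V,A),w)$ of \wcw{}, I would first invoke \cref{prop:wdkmc} with parameter $k = \lfloor n/2 \rfloor$ and approximation parameter $\varepsilon$ to obtain a set $L \subseteq V$ with $|L| = \lfloor n/2 \rfloor$ such that, writing $R = V \setminus L$ and letting $A^*$ denote the arcs from $R$ to $L$, the quantity $w(A^*)$ is at most $(1+\varepsilon)$ times the weight of a minimum weighted directed $(\lfloor n/2 \rfloor, \lceil n/2 \rceil)$-cut in $G$.

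The key lower bound is that, if $\pi^\star$ is an optimal ordering of weighted directed cutwidth $\opt$, then $\cut_{\pi^\star}^{\lfloor n/2 \rfloor}$ is some particular weighted $(\lfloor n/2 \rfloor, \lceil n/2 \rceil)$-cut of weight at most $\opt$, so the minimum such cut also has weight at most $\opt$. Consequently $w(A^*) \leq (1+\varepsilon)\opt$. Next, I would solve the weighted subinstances $G[L]$ and $G[R]$ exactly, each on at most $\lceil n/2 \rceil$ vertices, using the weighted Held--Karp dynamic program in $O^*(2^{n/2} \log(\wmax))$ time, and concatenate the resulting orderings by placing the vertices of $L$ first and those of $R$ second.

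To bound the approximation ratio, for any position $i$ of the concatenated ordering $\pi$, the weighted cut $\cut_{\pi}^i$ decomposes into two disjoint pieces: a cut that lies entirely inside $G[L]$ or entirely inside $G[R]$ (depending on which side of the split $i$ falls on), together with a subset of $A^*$. Since removing vertices and arcs cannot increase weighted directed cutwidth, the contribution of the internal piece is at most the weighted cutwidth of $G[L]$ or $G[R]$, which is at most $\opt$; the contribution of the $A^*$-piece is at most $w(A^*) \leq (1+\varepsilon)\opt$. Adding yields a weighted cut of at most $(2+\varepsilon)\opt$ at every position, as desired.

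The running time is dominated by the single call to \cref{prop:wdkmc} with $k = \lfloor n/2 \rfloor$, giving $O(2^{\omega n/3} \log(\wmax) (\nicefrac{1}{\varepsilon})^2 k^2)$, since the two Held--Karp calls run in $O^*(2^{n/2} \log(\wmax))$ time each, which is absorbed as $\omega/3 > 1/2$. I do not foresee a serious obstacle: the construction is structurally identical to the unweighted case, and the only subtlety is that the $(1+\varepsilon)$ slack from \cref{prop:wdkmc} combines additively (not multiplicatively) with the two exact subinstance contributions, which is exactly what is needed to land at $(2+\varepsilon)$ rather than at a worse factor such as $2(1+\varepsilon)$.
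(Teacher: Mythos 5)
Your proposal is correct and follows essentially the same route as the paper: replace the exact \dkmc{} call from \cref{prop:cw} with the $(1+\varepsilon)$-approximation of \cref{prop:wdkmc} at $k=\lfloor n/2\rfloor$, solve $G[L]$ and $G[R]$ exactly by weighted Held--Karp, concatenate, and observe that the cut-$A^*$ contribution costs at most $(1+\varepsilon)\opt$ while the internal contribution costs at most $\opt$, yielding $(2+\varepsilon)\opt$. You simply spell out the lower-bound step (that $\cut_{\pi^\star}^{\lfloor n/2\rfloor}$ witnesses a balanced cut of weight $\le\opt$) which the paper compresses into ``by definition.''
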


\begin{proof}
    The proof is similar to the proof of \cref{prop:cw}.
    Let~$(G=(V,A),w)$ be the input instance to \wcw.
    We find a $(1+\varepsilon)$-approximation for the instance~$(G,w,\lfloor \frac{n}{2} \rfloor)$ of \wdkmc{} using \cref{prop:wdkmc}.
    This takes~$O(2^{\frac{\omega n}{3}} \log(\wmax) (\frac{1}{\varepsilon})^2k^2)$ time.
    Let~$L$ be the set of vertices returned by \cref{prop:dkmc}, let~$R = V \setminus L$, and let~$A' \subseteq A$ be the set of arcs from~$R$ to~$L$.
    Note that~$\frac{w(A')}{1+\varepsilon}$ is by definition a lower bound for the weighted directed cutwidth of~$G$.
    Next, we can solve~$G[L]$ and~$G[R]$ optimally in~$O^*(2^{\frac{n}{2}} \log(\wmax))$ time~\cite{HK61}.
    Since adding a vertex and/or arc to a graph cannot decrease its cutwidth, we have that the weighted directed cutwidth of~$G[L]$ and~$G[R]$ are both at most the weighted directed cutwidth of~$G$.
    Adding the edges from~$L$ to~$R$ also does not increase the cutwidth.
    Hence, placing all vertices in~$L$ according to the computing solution for~$G[L]$ and then all vertices in~$R$ according to the solution for~$G[R]$ gives an ordering for the graph~$G'=(V,A\setminus A')$ that shows that the weighted directed cutwidth of~$G'$ is at most the weighted directed cutwidth of~$G$.
    Adding the set~$A'$ to~$G'$ can increase the cutwidth by at most~$w(A')$ and hence the computed ordering is a $(2+\varepsilon)$-approximation.
    The running time is in~$O(2^{\frac{\omega n}{3}} \log(\wmax) (\frac{1}{\varepsilon})^2k^2)$ as~$\frac{\omega}{3} > \frac{1}{2}$ and this concludes the proof.
\end{proof}

The generalizations for \ola{} and \dola{} are basically the same and are hence omitted.
We just mention that the slightly weaker running time~$(1-\frac{\alpha}{4})$ instead of~$(1-\frac{\alpha}{2})$ in the exponent comes from the fact that we only have a~$(1+\varepsilon)$-approximation for \wdkmc{} (where we set~$\varepsilon = \frac{\alpha}{2}$ and then compute the approximation for all~${\frac{\alpha}{4}n \leq k \leq (1-\frac{\alpha}{4})n}$ in order to compensate.
The correctness then follows from the fact that~$\frac{1+\frac{\alpha}{2}}{1-\frac{\alpha}{2}} \leq \frac{1}{1-\alpha}$ for all~$\alpha > 0$.

\begin{proposition}
        For each constant~$0 < \alpha < 1$, there is a~$(1+\frac{1}{1-\alpha})$-approximation for \dola{} that runs in~$O^*(2^{\frac{\omega n}{3}} + 2^{(1-\frac{\alpha}{4})n})$ time.
\end{proposition}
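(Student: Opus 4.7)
The plan is to mimic the proof of \cref{prop:dola} with two modifications that absorb the loss of quality introduced by \cref{prop:wdkmc}. First, I invoke the $(1+\varepsilon)$-approximation for \wdkmc{} with $\varepsilon = \alpha/2$ in place of solving \dkmc{} exactly. Second, I iterate over cut sizes in the slightly narrower range $\frac{\alpha n}{4} \leq k \leq (1-\frac{\alpha}{4})n$ rather than the original $\frac{\alpha n}{2} \leq k \leq (1-\frac{\alpha}{2})n$. Running \cref{prop:wdkmc} for each such $k$ costs $O^*(2^{\omega n/3})$ in total and returns, for each $k$, a cut of weight at most $(1+\alpha/2)$ times the true minimum $(k,n-k)$-cut. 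I then select the returned cut $(L',R')$ of minimum weight, write $A'$ for the set of arcs going from $R' = V \setminus L'$ to $L'$, and note that $|L'|, |R'| \leq (1-\alpha/4)n$ by the choice of range.

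Next, I would bound $w(A')$ using the same pigeonhole argument as in \cref{prop:dola}. Fix an optimal \dola{} ordering of value $\opt$. The $(1-\alpha/2)n$ cut positions within the selected range give $(k,n-k)$-cuts whose weights sum to at most $\opt$, so by averaging at least one position has true cut weight at most $\frac{\opt}{(1-\alpha/2)n}$. The $(1+\alpha/2)$-approximation guarantee then bounds $w(A')$ by $\frac{1+\alpha/2}{(1-\alpha/2)n}\cdot \opt$. The elementary inequality $\frac{1+\alpha/2}{1-\alpha/2} \leq \frac{1}{1-\alpha}$, equivalent after cross-multiplication to $\alpha^2 \geq 0$, yields $w(A') \leq \frac{\opt}{(1-\alpha)n}$, recovering the same effective bound on the selected cut as in the unweighted case.

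With the cut fixed, I solve $G[L']$ and $G[R']$ optimally in $O^*(2^{(1-\alpha/4)n}\log \wmax)$ time via the algorithm of Bodlaender et al.\ and output the arrangement placing $L'$ first in its optimum order and $R'$ second in its optimum order. The cost analysis mirrors that of \cref{prop:dola}: the within-part sums of weighted stretches are lower-bounded by the sub-instance optima and thus are combined at most $\opt$; arcs from $L'$ to $R'$ are forward and cost nothing beyond what is already counted; and each arc of $A'$ contributes at most $n$, giving a total contribution of at most $n \cdot w(A') \leq \frac{\opt}{1-\alpha}$. Summing yields the claimed $(1 + \frac{1}{1-\alpha})$-approximation, and the running time $O^*(2^{\omega n/3} + 2^{(1-\alpha/4)n})$ follows from the two phases.

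The main obstacle, and the only real subtlety, is the joint tuning of $\varepsilon$ and the $k$-range. The approximation error $1+\varepsilon$ from \wdkmc{} multiplies the pigeonhole bound $\opt/((1-2\varepsilon)n)$, and this product must collapse into the target $\opt/((1-\alpha)n)$. Setting $\varepsilon = \alpha/2$ and shrinking the range by $\alpha/4$ on each side makes the algebra work out; keeping the wider range from \cref{prop:dola} would push the final factor above $\frac{1}{1-\alpha}$, while taking $\varepsilon$ substantially smaller than $\alpha/2$ would unnecessarily inflate the $(1/\varepsilon)^2$ factor hidden inside the running time of \cref{prop:wdkmc} without improving the final approximation guarantee.
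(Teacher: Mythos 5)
Your proposal is correct and matches the paper's intended argument exactly: set $\varepsilon = \alpha/2$ in the weighted $(k,n-k)$-cut approximation, iterate $k$ over $[\alpha n/4,\ (1-\alpha/4)n]$, and use $\frac{1+\alpha/2}{1-\alpha/2} \leq \frac{1}{1-\alpha}$ to absorb the approximation loss while the larger admissible part sizes (at most $(1-\alpha/4)n$) yield the weaker exponent. One small terminological slip: you call the new $k$-range ``narrower'' and say you ``shrink'' it, but $[\alpha n/4,\ (1-\alpha/4)n]$ is a \emph{wider} interval than $[\alpha n/2,\ (1-\alpha/2)n]$; the widening is precisely what provides the extra $\approx (1-\alpha/2)n$ pigeonhole positions needed to compensate for the $(1+\alpha/2)$ factor, at the cost of larger subinstances, and your subsequent computations correctly reflect this.
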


\begin{proposition}
        For each constant~$0 < \alpha < 1$, there is a~$(1+\frac{1}{2(1-\alpha)})$-approximation for \ola{} that runs in~$O^*(2^{\frac{\omega n}{3}} + 2^{(1-\frac{\alpha}{4})n})$ time.
\end{proposition}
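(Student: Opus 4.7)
The plan is to follow the template of the unweighted undirected \ola{} result, replacing the exact call to \dkmc{} with the approximate \cref{prop:wdkmc} and the exact solver on induced subgraphs by its weighted Held-Karp counterpart from Bodlaender et al.\ First I would run \cref{prop:wdkmc} with~$\varepsilon = \alpha/2$ for every integer~$k$ with~$\alpha n/4 \le k \le n/2$, in total time~$O^*(2^{\omega n/3} \log(\wmax))$, recording the lightest returned cross-edge set~$E'$ together with the associated sides~$L'$ and~$R'$, which satisfy $|L'| \le n/2$ and $|R'| \le (1-\alpha/4)n$ by construction.

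Next I would bound~$w(E')$. For an optimal ordering~$\pi^*$, the weights~$w(\cut^k_{\pi^*})$ over the range $k \in [\alpha n/4, n/2]$ average to at most~$\opt/((1-\alpha/2)n)$ because $\sum_{i=1}^{n-1} w(\cut^i_{\pi^*}) = \opt$, so the minimum weighted $(k,n-k)$-cut over this range is at most this value. Since \cref{prop:wdkmc} only provides a $(1+\alpha/2)$-approximation, we only obtain $w(E') \le (1+\alpha/2)\opt/((1-\alpha/2)n)$, which I would then simplify to $\opt/((1-\alpha)n)$ using the elementary inequality~$(1+\alpha/2)/(1-\alpha/2) \le 1/(1-\alpha)$.

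Afterwards I would solve~$G[L']$ and~$G[R']$ exactly using the weighted Held-Karp-style algorithm of Bodlaender et al.\ in time~$O^*(2^{(1-\alpha/4)n}\log(\wmax))$, since both parts have at most~$(1-\alpha/4)n$ vertices. Finally, I combine these two orderings into a global ordering of~$G$ using the forward/reverse trick from the undirected unweighted proof: for each side independently, choose between the computed ordering and its reverse so as to minimize the sum of contributions to edges in~$E'$, pinning the opposite-side endpoint of each cross-edge at position~$|L'|$. As in the unweighted case this guarantees the average stretch of an edge of~$E'$ is at most~$n/2$, so~$E'$ contributes at most~$w(E') \cdot n/2 \le \opt/(2(1-\alpha))$ to the objective. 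Adding this to the optimal costs inside~$L'$ and~$R'$ (together bounded by~$\opt$, since the restriction of any optimal ordering of~$G$ to either side is feasible there) yields the claimed $(1 + \tfrac{1}{2(1-\alpha)})$-approximation.

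The main subtlety will be the error accounting: because the cut supplied by \cref{prop:wdkmc} is only approximate, I must verify that the inflation factor~$(1+\alpha/2)$ can be absorbed by weakening the admissible range from~$[\alpha n/2, n/2]$ (as in the unweighted case) to~$[\alpha n/4, n/2]$, which is exactly what causes the exponent~$1-\alpha/2$ to degrade to~$1-\alpha/4$ in the running time. Once the inequality~$(1+\alpha/2)/(1-\alpha/2) \le 1/(1-\alpha)$ is in hand, the remaining steps are a direct lift of the \cref{prop:dola}-style analysis combined with the reverse-ordering idea from the undirected unweighted case.
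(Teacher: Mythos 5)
Your proposal is correct and follows essentially the same route as the paper, which only gives a two-sentence sketch for this weighted generalization (set~$\varepsilon = \alpha/2$, widen the range of~$k$, use $\tfrac{1+\alpha/2}{1-\alpha/2}\le\tfrac{1}{1-\alpha}$). One small imprecision: averaging $w(\cut_{\pi^*}^k)$ only over $k\in[\alpha n/4, n/2]$ gives a bound of roughly $4\opt/((2-\alpha)n)$, not $\opt/((1-\alpha/2)n)$; to get the latter you should invoke the symmetry of undirected $(k,n-k)$-cuts to argue that the minimum over $k\in[\alpha n/4,n/2]$ equals the minimum over the full range $[\alpha n/4,(1-\alpha/4)n]$ (which is also what the paper's note uses), and then apply pigeonhole there. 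With that clarification, your bound on $w(E')$, the forward/reverse averaging trick, and the running-time accounting all go through as you wrote them.
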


\section{Conclusion}
\label{sec:concl}

In this work, we initiated the study of faster exponential-time approximation algorithms for vertex-ordering problems.
For \fas{} we designed an approximation scheme using a novel self-improving technique.
For other problems, we showed how to compute a constant-factor approximation.
We mention that the best known polynomial-time algorithms do not achieve a factor better  than~$O(\sqrt{\log(n)})$.

We conclude with a few open problems. 
One obvious direction is to design a faster exponential-time algorithm for \dkmc---exact or approximate---which would immediately yield faster approximation algorithms for almost all of the problems considered in this paper.
Next, there are a lot more (vertex-ordering) problems that are also interesting to study.
Examples include \textsc{Treewidth}, \textsc{Subset Feedback Vertex/Arc Set} (here also the restriction to tournament graphs is often studied), \textsc{Travelling Sales Person}, \textsc{Bandwidth}, or \textsc{Minimum Fill-In}.
Second, can our constant-factor approximations be improved to~$(1+\varepsilon)$ as in the case of \fas?
On a related note, we are not aware of any tool to show lower bounds for exponential-time approximations and most known hypotheses like SETH seem to weak to base such lower bounds on.
In particular, a reasonably sounding candidate one could call Gap-SETH (a combination of SETH and Gap-ETH) is not true as shown by Alman et al.~\cite{AlmanCW20}.
Hence, we ask the following question:
\begin{quote}
    What would be a sensible hypothesis to exclude~$(1+\varepsilon)$-approximations in~$O((2-\delta_\varepsilon)^n)$ time for some~$\delta_\varepsilon>0$ depending only on~$\varepsilon$ for different problems?
\end{quote}
Finally, all our algorithms use exponential space (here, the bottleneck is our exact algorithm for \dkmc{}, which first builds an exponential-size graph).
Can this be reduced to polynomial space while still beating the fastest known exact algorithms?

\bibliography{bib}

\end{document}